\setlist[enumerate]{leftmargin=.5in}
\setlist[itemize]{leftmargin=.5in}
\crefname{hypothesis}{Hypothesis}{Hypotheses}
\newcommand{\mcA}{{\mathcal{A}}}
\newcommand{\Ff}{\mathcal{F}}
\newcommand{\FF}{\mathbb{F}}
\newcommand{\PP}{\mathbb{P}}
\newcommand{\EE}{\mathbb{E}}
\newcommand{\mfT}{{\mathfrak{T}}}
\newcommand{\tT}{{t\in\mfT}}
\newcommand{\mfG}{{\mathfrak{G}}}
\newcommand{\Id}{{\mathds 1}}
\title{Optimal Generation and Trading in Solar Renewable Energy Certificate (SREC) Markets\thanks{SJ would like to acknowledge the support of the Natural Sciences and Engineering Research Council of Canada
(NSERC), [funding reference numbers RGPIN-2018-05705 and RGPAS-2018-522715]}\\\vspace{0.5cm}{\color{black}\normalfont {\large{\emph{Forthcoming in Applied Mathematical Finance}}}}}
\author{Arvind Shrivats \and Sebastian Jaimungal\thanks{Department of Statistical Sciences, University of Toronto, Toronto, ON
  (\email{shrivats@utstat.toronto.edu},\email{sebastian.jaimungal@utoronto.ca} ).}}
\begin{document}

\maketitle
\begin{abstract}
SREC markets are a relatively novel market-based system to incentivize the production of energy from solar means. A regulator imposes a floor on the amount of energy each regulated firm must generate from solar power in a given period and provides them with certificates for each generated MWh. Firms offset these certificates against the floor and pay a penalty for any lacking certificates. Certificates are tradable assets, allowing firms to purchase/sell them freely. In this work, we formulate a stochastic control problem for generating and trading in SREC markets from a regulated firm's perspective. We account for generation and trading costs,  the impact both have on SREC prices, provide a characterization of the optimal strategy, and develop a numerical algorithm to solve this control problem. Through numerical experiments, we explore how a firm who acts optimally behaves under various conditions. We find that an optimal firm's generation and trading behaviour can be separated into various regimes, based on the marginal benefit of obtaining an additional SREC, and validate our theoretical characterization of the optimal strategy. We also conduct parameter sensitivity experiments. 
\end{abstract}

\begin{keywords}
Commodity Markets, Stochastic Control, SREC, Cap and Trade, Market Design
\end{keywords}

\begin{AMS}
  37H10, 49L20, 39A14, 91G80
\end{AMS}

\section{Introduction}
As the impacts of climate change continue to be felt worldwide, policies to reduce greenhouse gas emissions and promote renewable energy generation are of increasing importance. One approach that encapsulates many policies is market-based solutions. The most well-known of the policies which fall under this umbrella are carbon cap-and-trade (C\&T) markets.

In carbon C\&T markets, regulators impose a limit on the amount of carbon dioxide ($\text{CO}_2$) that regulated firms can emit during a certain time period (referred to as a compliance period). They also distribute allowances (credits) to individual firms in the amount of this limit, each allowing for a unit of $\text{CO}_2$ emission, usually one tonne. Firms must offset each of their units of emissions with an allowance, or face a monetary penalty for each allowance they are lacking. These allowances are tradable assets, allowing firms who require more credits than what they were allocated to buy them, and firms who require less to sell them. In this way, C\&T markets aim to find an efficient way of allocating the costs of $\text{CO}_2$ abatement across the regulated firms.

In practice, these systems regulate multiple consecutive and disjoint compliance periods, which are linked together through mechanisms such as \textit{banking}, where unused allowances in period-$n$ can be carried over to period-$(n+1)$. Other linking mechanisms include \textit{borrowing} from future periods (where a firm may reduce its allotment of allowances in period-$(n+1)$ in order to use them in period-$n$) and \textit{withdrawal}, where non-compliance in period-$n$ reduces period-$(n+1)$ allowances by the amount of non-compliance (in addition to the monetary penalty previously mentioned).

A closely related alternative to these cap-and-trade markets are \textit{renewable energy certificate} markets (REC markets). A regulator sets a floor on the amount of energy generated from renewable sources for each firm (based on a percentage of their total energy generation), and provides certificates for each MWh of energy produced via these means\footnote{Not all generators of renewable energy who participate in REC markets are regulated Load Serving Entities (LSEs), though in this work, we largely focus on the decisions faced by those who are regulated.}. \textcolor{black}{This is also known as a Renewable Portfolio Standard (RPS).} To ensure compliance, each firm must surrender certificates totaling the floor at the end of each compliance period, with a monetary penalty paid for each lacking certificate. The certificates are traded assets, allowing regulated Load Serving Entities (LSEs) to make a choice about whether to produce electricity from renewable means themselves, or purchase the certificates on the market (or a mix of both). 

REC markets can be used to encourage growth of a particular type of renewable energy. The most notable of these systems are Solar REC markets (SREC markets), which have been implemented in many areas of the northeastern United States\footnote{The largest and most mature SREC market in North America is the New Jersey SREC Market}, and are the focus of this work.

The similarities between carbon cap-and-trade markets and SREC markets are clear. However, there are also some notable differences. One key difference between the SREC market and traditional carbon cap-and-trade markets is the uncertainty in the former market is the supply of certificates (driven by some generation process), while in the latter, the uncertainty is in the demand for allowances (driven by an emissions process). In SREC markets, banking is typically implemented, but borrowing and withdrawal are not. Broadly speaking, SREC markets can be considered the inverse of a cap-and-trade system.

The existing literature on SREC markets largely focus on certificate price formation. \cite{coulon_khazaei_powell_2015} presents a stochastic model for SREC generation. They also calibrate it to the New Jersey SREC market, and ultimately solve for the certificate price as a function of economy-wide generation capacity and banked SRECs, and investigates the role and impact of regulatory parameters on these markets. The volatility of REC prices has been noted in other works, such as \cite{amundsen2006price} and \cite{hustveit2017tradable}. The latter  focuses on the Swedish-Norwegian electricity certificate market and develops a stochastic model to analyze price dynamics and policy. \cite{khazaei2017adapt} studies an alternate design scheme for SREC markets and shows how it can stabilize SREC prices. 

Additionally, there are extensive studies of the  carbon cap-and-trade markets, particularly in developing stochastic equilibrium models for emissions markets. \cite{hitzemann2018equilibrium} presents a general stochastic framework for firm behaviour leading to the expression of allowance price as a strip of European binary options written on economy-wide emissions. Agents' optimal strategies and properties of allowance prices are also studied by \cite{carmona2010market} and \cite{seifert_uhrig-homburg_wagner_2008} within a single compliance period setup, with the former also making significant contributions through detailed analyses of potential shortcomings of these markets and their alternatives. \cite{carmona_fehr_hinz_2009} also proposes a stochastic equilibrium model to explain allowance price formation and develop a model where abatement (switching from less green to more green fuel sources) costs are stochastic. There is also significant work on structural models for financial instruments in emissions markets, such as \cite{howison_schwarz_2012} and \cite{carmona_coulon_schwarz_2012}.

Our contribution addresses a natural question in these systems; how should regulated LSEs behave? Here, we use stochastic control techniques to  characterize firm specific optimal behaviour through  generation and trading and discuss potential takeaways from a market design perspective. We believe these results are of interest to both regulators, the designers of SREC markets (\textcolor{black}{and REC markets in general}), and the firms regulated by them.

Specifically, we explore a cost minimization problem of a single regulated firm in a single-period SREC market with the goal of understanding their optimal behaviour as a function of their current level of compliance and the market price of SRECs. To this end, we  pose the problem as a continuous time stochastic control problem. We provide the optimality conditions, and analyze the form of the optimal controls in feedback form to illuminate features of the solution. In addition, we numerically solve for the optimal controls of the regulated firm  as generation and trading costs vary, including a detailed analysis of various scenarios and sample paths. We also explore the sensitivity of the optimal controls to the various parameters in the model. We extend these results to a single regulated firm in a multi-period SREC market.

There are several differences between our work and the extant literature. Firstly, we focus on the SREC market, which is a new and burgeoning market and there are few studies (in comparison to carbon C\&T markets). Secondly, we focus on the optimal behaviour of firms, something that has not been studied in SREC markets. In the carbon literature, prior works formulate a stochastic control problem in order to better understand the behaviour of the allowance prices, while we begin with an SREC price process (which regulated agents affect by trading and generation) and are interested in how the agent should optimally behave. We assume  that agents affect the SREC price process in a manner similar to the permanent price impact models  in the optimal execution literature  (see \cite{almgren2001optimal}, \cite{cartea2015algorithmic}).

The remainder of this work is organized as follows. \textcolor{black}{\Cref{srec_markets} provides a background on REC markets in practice and SREC markets in particular.} \Cref{model} discusses our model and poses the general optimal behaviour problem in continuous time.  \Cref{optimality} presents optimality results in a continuous time setting. \Cref{discrete} provides a discrete time formulation and numerically solves the dynamic programming equation to characterize the optimal behaviour of a regulated firm. Finally, in \Cref{results}, we present the results of our work including sensitivity analysis.

\section{SREC Market Overview} \label{srec_markets}

\textcolor{black}{RPS regulations have been instituted in numerous regions around the globe. In this section, we provide a brief overview of their use, with a particular focus on RPS regulations in the United States. These regulations aim to promote the production of electricity via solar energy (among potentially other energies) through the use of SREC markets. While we focus on the United States RPS regulations and their associated REC / SREC markets in this section, we note that RPS regulations have also been instituted around the globe, including China, Sweden, and Norway, among others. }

\textcolor{black}{Roughly 30 US states have enacted RPS regulations, see \cite{kolesnikoff_cleveland_shields_2019}. These regulations typically apply to investor owned utilities (IOUs) which  are private LSEs (as opposed to municipal or state LSEs). Such IOUs supply electricity to the grid as part of their business-as-usual operations. They receive a (tradable) REC for each MWh of electricity they generate from renewable means. The RPS requires that  regulated LSEs submit RECs annually in an amount proportional to their total electricity supply. They face a monetary penalty for any RECs they lack under the amount required by the RPS.}

\textcolor{black}{REC markets also include players who are not regulated by the RPS, but may have the ability to produce RECs. Often these are individuals who have attached solar panels to their place of residence,  registered  with the tracking authority, and sell the resulting RECs on the market.}

\textcolor{black}{RPS regulations can be stratified further. Many of the 30 US states that have RPS' have also instituted a `carve-out' for solar energy -- that is, a specification that a certain proportion of the renewable energy generated must come from solar means. This results in an SREC market, as opposed to a general REC market -- here, the certificates represent the solar nature of the generated energy. As such, the IOUs in such states must specifically generate solar energy (or purchase SRECs from an IOU / individual that has) in order to comply with the RPS regulation that applies to them. Unused certificates can be banked for a given amount of years before expiring \footnote{In New Jersey's SREC market, unused SRECs can be banked for four additional years, giving them a five-year life in total.}. This results in different `vintages' of SRECs in the market, depending on the year the SREC was produced (as that will impact how many years it can continue being banked into the future). In this work, we make a simplifying assumption that firms can bank SRECs indefinitely to avoid dealing with multiple vintages of SRECs and as including multiple vintages does not add more insight into the problem.
}

\textcolor{black}{The solar carve out is typically not large, relative to the overall distribution of electricity generation. New Jersey's SREC market, the largest and most mature in North America, has a solar carve-out of just 5.1\% of overall electricity sales in 2021, after which the state is transitioning to a new, currently undetermined solar energy generation incentive (see \cite{lane_2020}). In other states such as Washington D.C. (who are continuing their SREC programs for the foreseeable future), the solar carve out is planned to reach 10\% of overall electricity sales by 2041. }

\textcolor{black}{As New Jersey's market is the most mature of the North American SREC markets, we discuss it in more detail. While there is a dearth of academic literature regarding these markets, useful background information regarding it can be found on \cite{nj_power}. Despite the winding down of the New Jersey SREC market, discussion of it is nonetheless useful in order to better understand what a relatively  mature SREC market looks like. In general, SREC markets still figure to be an important component of energy policy, with states like Maryland and Washington D.C. (as alluded to above) continuing to develop their own solar carve outs and associated SREC markets. Additionally, REC markets in general will  continue to grow in importance in the future, and our work  applies to REC markets in general.}

\textcolor{black}{Now focusing specifically on the NJ SREC market, we plot the number of SRECs issued in \cref{fig:issued_SRECs}. We retrieve this data from PJM-GATS, the administrator which tracks the New Jersey SREC market (see \cite{pjm}). The figure shows consistent increases in generated SRECs, as well as the seasonality effect. This latter property is natural due to reduced sunlight in winter months. As each SREC corresponds to a MWh of electricity generated from solar means, the figure suggests that market's monthly solar generation nears 40,000 MWh at its peak (around June 2019). From this, we can see the notable growth of the NJ SREC market. Over this time, the NJ SREC market has undergone numerous (significant) regulatory changes. These included changes to the requirement schedule, the penalty schedule, as well as the rules around banking of unused SRECs. The most notable of these changes occurred in 2012, where the regulatory body drastically decreased the non-compliance penalty, increased the SREC requirement, and allowed for extended banking of unused SRECs. We do not discuss these changes further in this section, except to remark that they did occur, and have contributed to the observed patterns.}

\begin{figure}[!t]
\begin{minipage}[t]{0.31\textwidth}
    \centering
    \includegraphics[align = c, width=\textwidth]{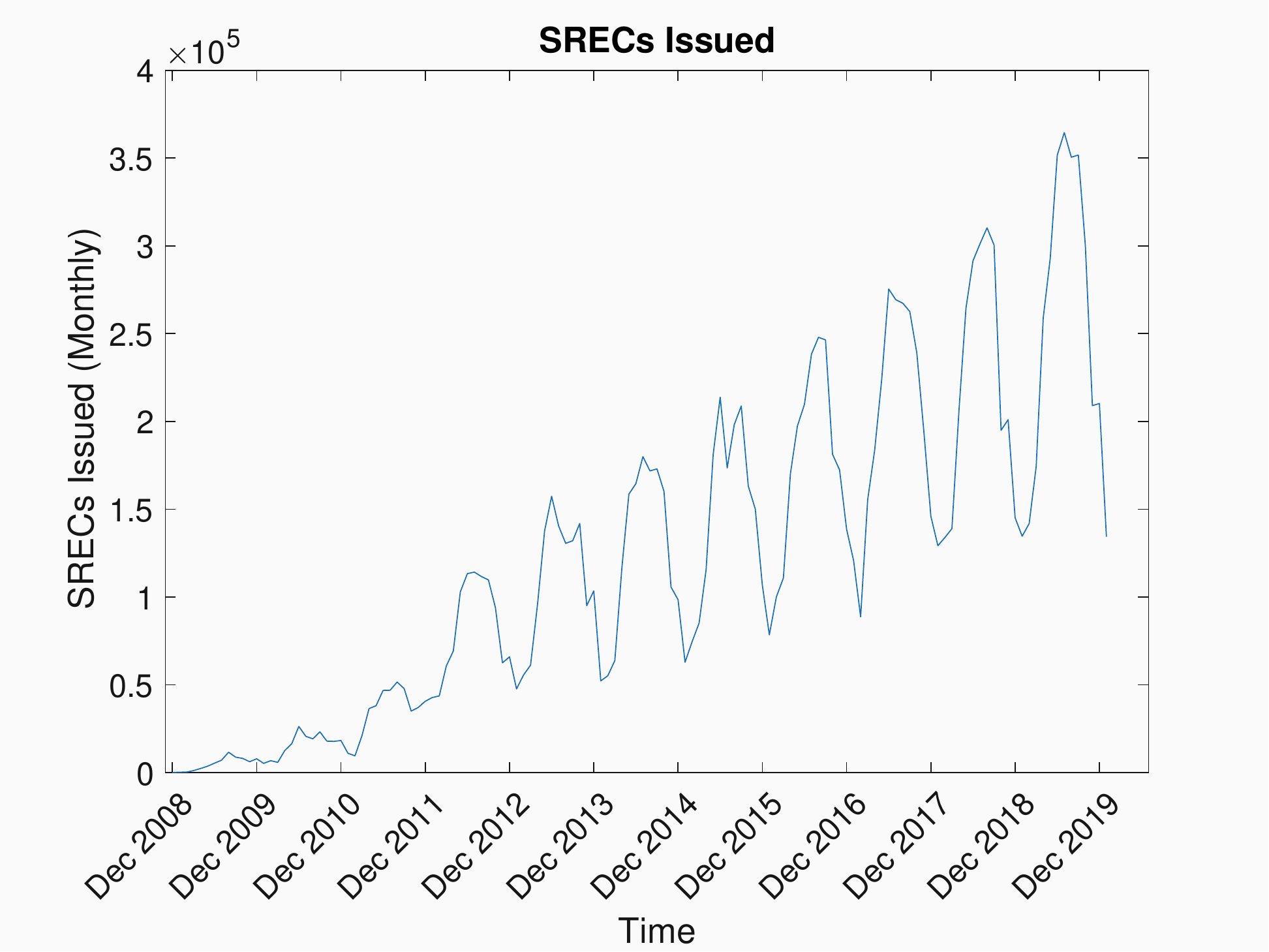}
    \caption{Issued SRECs in New Jersey SREC Market from 2008 - 2019}
    \label{fig:issued_SRECs}
\end{minipage}
\hfill
\begin{minipage}[t]{0.31\textwidth}
    \centering
    \includegraphics[align = c, width=\textwidth]{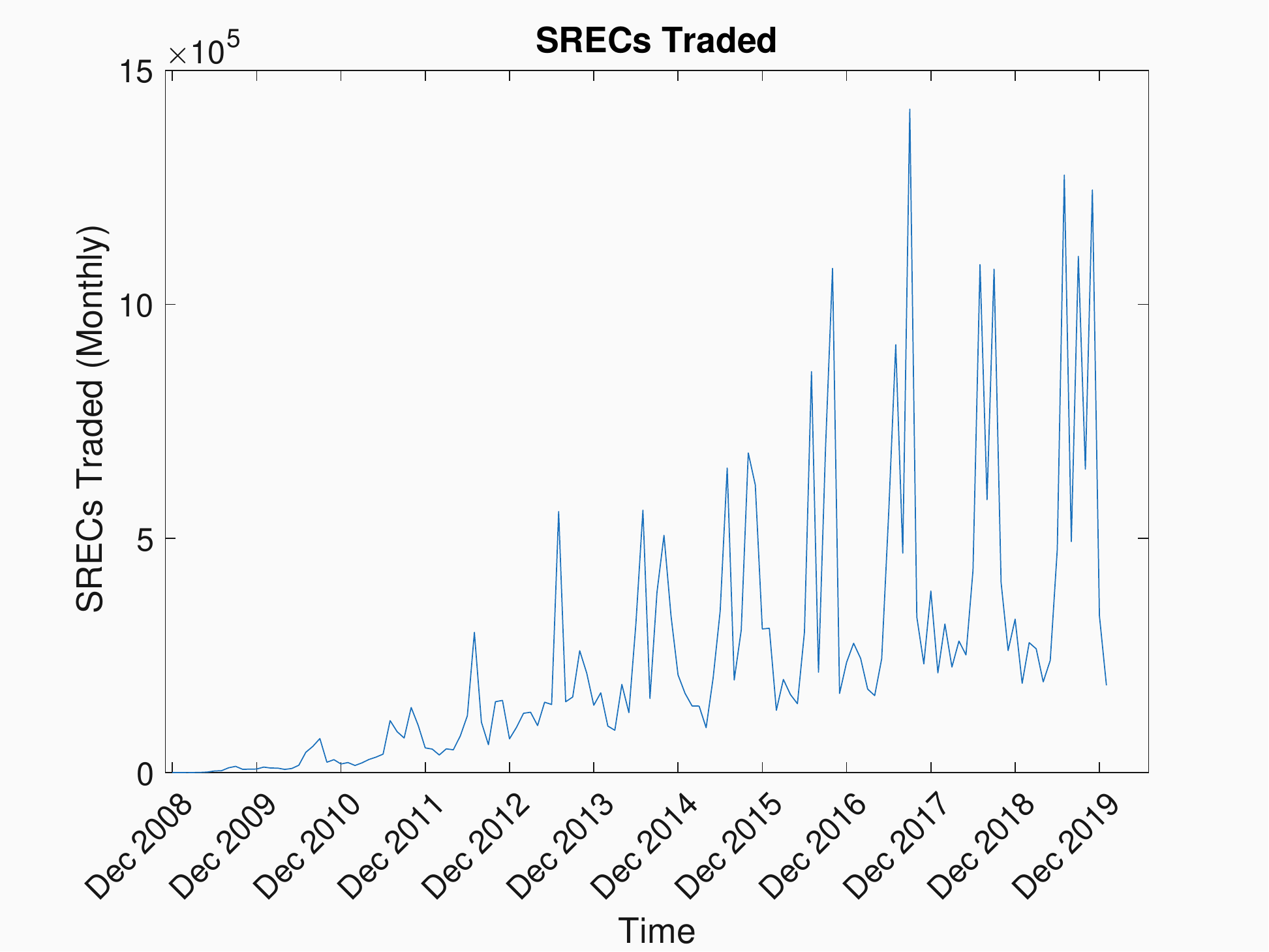}
    \caption{Traded SRECs (all vintages) in New Jersey SREC Market from 2008 - 2019}
    \label{fig:traded_SRECs}
\end{minipage}
\hfill
\begin{minipage}[t]{0.31\textwidth}
    \centering
    \includegraphics[align = c, width=\textwidth]{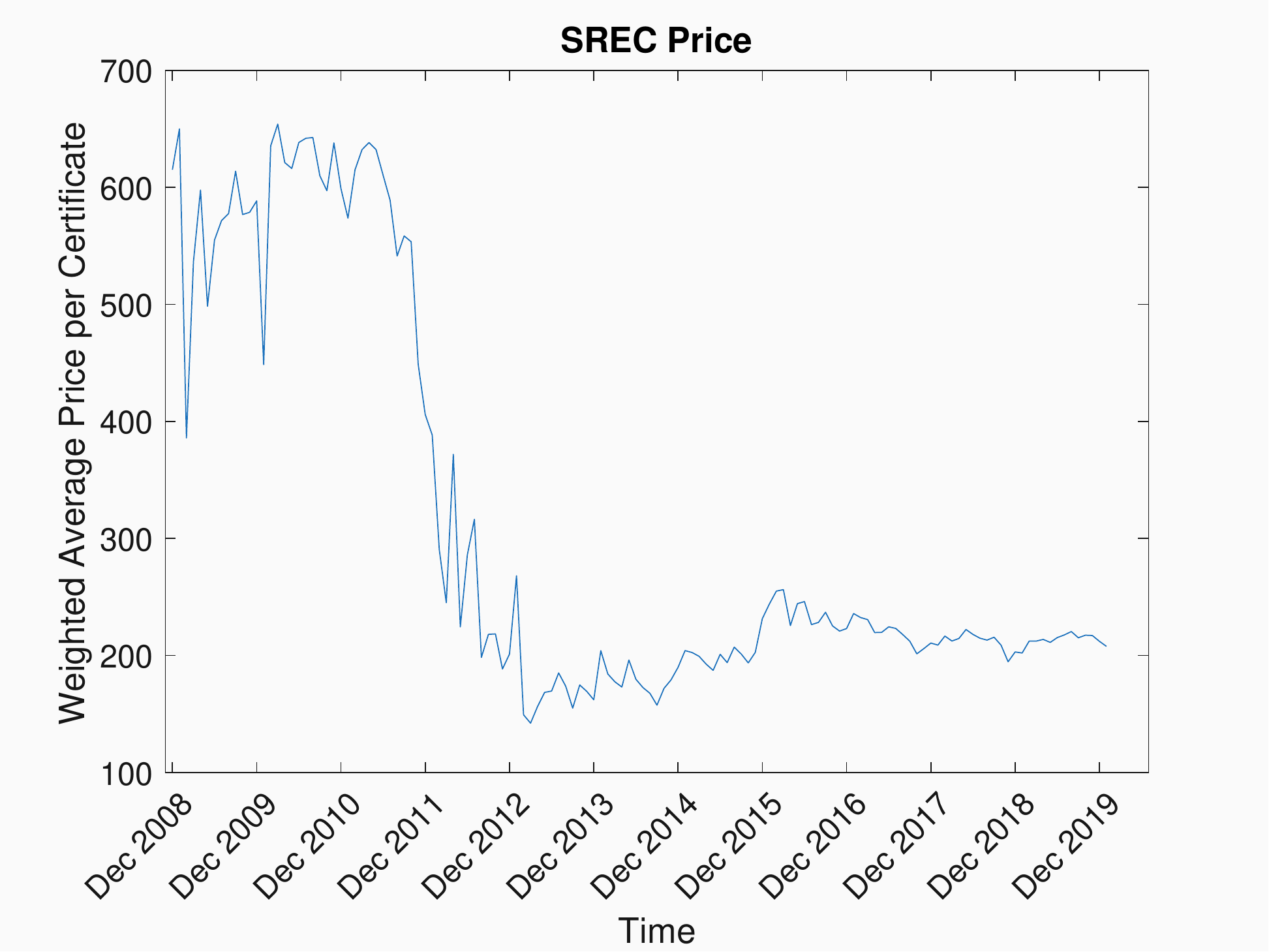}
    \caption{Weighted average SREC price (across all vintages) in New Jersey SREC Market from 2008 - 2019}
    \label{fig:srec_price}
\end{minipage}
\end{figure}
\textcolor{black}{This growth is also apparent when we plot the magnitude of trading activity (of all SREC vintages) over time, as in \cref{fig:traded_SRECs}. Once again, we see notable seasonality, with the peaks occurring around June and October of each year; this has to do with the compliance dates of the energy year in the NJ SREC market (the energy year runs from June 1 - May 31), along with a `true-up' period. This is a period of six months, from June 1 to Nov 30, which is the time span firms have between the end of the energy year and the compliance date when they must submit their SRECs.}

\textcolor{black}{Next, we plot the weighted average monthly SREC price from 2008 onward, in \cref{fig:srec_price}. This is the average price of each SREC vintage sold in each month, weighted by the relative proportion of each vintage in the market. In practice, older vintages will typically trade at a slight discount to the current vintage, as they can be banked for less time in the future, on account of being older. We plot the weighted average for visual simplicity, because the difference between prices of older vintages and the current vintage tends to be rather small, and because the most recent vintage tends to have higher trading volumes.}

\textcolor{black}{From \cref{fig:srec_price}, we see a large amount of variance in the price in the early stages of the SREC market. The NJ SREC market initially sustained high SREC prices (close to the non-compliance penalty). However, as time passed, increased investment into solar generation led to oversupply, leading to lower prices. In conjunction with the aforementioned regulatory changes that lowered non-compliance penalties and increased the SREC requirement, this resulted in the notable drop in SREC price which occurred throughout 2011 and 2012. Since then, price changes have been less dramatic. For further information on the New Jersey SREC market and its price history, please see \cite{coulon_khazaei_powell_2015}, \cite{khazaei2017adapt}.}

\textcolor{black}{Finally, we remark on the number of agents regulated by SREC markets. As mentioned earlier, all electricity suppliers (equivalently, LSEs) in the licensed within the region over which the market is enforced must comply with the RPS obligations. In New Jersey, this includes hundreds of such firms (see \cite{supplier_list_2020} and \cite{nj_power}), in addition to the various other players who do not have RPS obligations, but have solar facilities that allow them to produce and sell SRECs.}

\section{The SREC Generation and Price Impact Model} \label{model}

\subsection{SREC Market Rules}

We assume the following rules for the SREC market, which are exogenously specified and fixed. In an $n$-period framework, a firm is obliged to submit $(R_1, ..., R_n)$ SRECs at the end of the compliance periods $[0, T_1], ..., [T_{n-1}, T_n]$, respectively. \textcolor{black}{As discussed in the previous section, the requirement that regulated firms are subject to is based on a proportion of the electricity they supply to the grid. We make a simplifying assumption, similar to \cite{coulon_khazaei_powell_2015}, that this requirement is instead exogenous.}

For the period $[T_{i-1}, T_i]$, firms  pay $P_i$ for each SREC below $R_i$ at $T_i$. Firms receive an SREC for each MWh of electricity they produce through solar energy. We assume firms may bank leftover SRECs not needed for compliance into the next period, with no expiry on SRECs. This is a simplifying assumption we make -- many SREC markets have limitations on how long an SREC can be banked for (e.g., in New Jersey's SREC market, an SREC can be banked for a maximum of four years). This assumption reduces the dimensionality of the state space. After $T_n$, all SRECs are forfeited.

A single period framework follows the rules above with $n = 1$. For convenience, we remove the subscripts in the notation for the terms defined above when discussing a single-period framework. That is, the regulated firm is required to submit $R$ SRECs at time $T$, representing their required production for the compliance period $[0, T]$. A penalty $P$ is imposed for each missing SREC at time $T$. The firm considers any costs/profits arising from the SREC system after $T$ to be immaterial. 

\subsection{Firm Behaviours}

We first consider a single firm who is optimizing their behaviour in a single compliance period SREC system. A regulated firm can control their planned generation rate (SRECs/year) at any given time $(g_t)_{\tT}$ (where $\mfT:=[0,T]$) and their trading rate (SRECs/year) at any given time $(\Gamma_t)_{\tT}$. The processes $g$ and  $\Gamma$ constitute the firm's controls. 

The trading rate may be positive or negative, reflecting that firms can either buy or sell SRECs at the prevailing market rate for SRECs. Firms also incur a trading penalty of $\frac{1}{2}\gamma \Gamma_t^2$, $\gamma > 0$, per unit time. This induces a constraint on their trading speed. In general, the quadratic penalty could be replaced by any convex function of $\Gamma_t$.

In an arbitrary time period $[t_1, t_2]$, The firm aims to generate $\int_{t_1}^{t_2} g_t \,dt$, but actually generates $\int_{t_1}^{t_2} g_t^{(r)} dt = \int_{t_1}^{t_2} g_t dt + \int_{t_1}^{t_2} \nu_t dB_t^{(1)}$, where $\nu_t$ is a deterministic function of time, and $\nu_tdB_t^{(1)}$ may be interpreted as the generation rate uncertainty at $t$. We assume that a firm has a baseline deterministic generation level $h_t$ (SRECs/year), below which there is no cost of generation. Methods similar to \cite{coulon_khazaei_powell_2015} may be used to estimate $h_t$. We assume that $h_t < \infty$ for all $t$. Increases in planned generation from their baseline production incurs the cost $C(g, h) := \frac{1}{2}\zeta (g - h)_+^2$ per unit time. 

\textcolor{black}{This choice of cost $C$ may be viewed as a `rental' cost for temporarily increasing SREC generation capacity, as opposed to an `investment' cost. It is as though the firm is renting additional capacity (i.e. solar panels) with which they are able to increase their SREC generation. However, this additional cost does not result in any long-run increases to their baseline production $h_t$. It is certainly possible to model instead expansion efforts rather than `rental' costs, but we leave such extensions for future work. 
}

In \cite{aid2017coordination}, the authors utilize a similar cost structure in the context of expanding solar capacity. There, costs are quadratic, but not one-sided. Lastly, we note our choice of $C$ is both differentiable and convex; any choice of $C$ with these properties could be used instead in order for the analysis contained in this paper to be valid. 

All processes are defined on the filtered probability space $(\Omega, \mathcal{F}, \FF = (\Ff_t)_{t\geq0}, \PP)$, where $\FF$ is the natural filtration generated by the SREC price. The set of admissible controls $\mcA$ equals the set of all progressively measurable (with respect to $\FF$) processes $(g_t,\Gamma_t)_\tT$ such that $\EE[\int_0^T g_t^2\, dt ] < \infty$, $\EE[\int_0^T\Gamma_t^2\,dt] < \infty$, and $g_t \geq 0$ for all $\tT$. 
At time $t$ the firm holds $b_t^{g, \Gamma}$ SRECs and the (controlled) SREC price is denoted $S_t^{g, \Gamma}$. 

The various processes satisfy the stochastic differential equations (SDEs)
\begin{subequations}
\begin{align}
S_t^{g, \Gamma} &= S_0 + \int_0^t\left( \mu_u + \eta\, \Gamma_u  - \psi \,g_u \right) du - \int_0^t \psi\, \nu_u\, dB_u^{(1)} + \int_0^t  \sigma_u\, dB_u^{(2)} , \quad \text{and}   \label{eq:S} \\
b_t^{g, \Gamma} &= b_0 + \int_0^t (g_u + \Gamma_u)\, du + \int_0^t \nu_u dB_u^{(1)},
\end{align}%
\label{eqn:S-and-b-SDE}
\end{subequations}%
where $B=(B^{(1)}_t,B^{(2)}_t)_{t\ge0}$ is a standard two-dimensional Brownian Motion
and $\mu, \sigma, \nu$ are deterministic functions. We further assume $\int_0^T \sigma_u^2 du < \infty$ and $\int_0^T \nu_u^2 du < \infty$. As the SDE above indicates,  trading ($\int \Gamma_u du$) and realized generation ($\int g_u^{(r)} du$) impact the SREC price linearly. As such, our model is similar to the price impact models commonly studied in optimal execution problems. Buying (selling) of SRECs pushes the price up (down) and generation
pushes the price downwards. In this way, a firm's behaviour impacts the rest of the market. SREC inventory ($b_t^{g, \Gamma}$) accumulates  by both trading and generation activity. 

\textcolor{black}{We observe that this exogenous specification of $S_t^{g, \Gamma}$ does not necessarily converge to any particular value as $t \rightarrow T$. A hallmark of many equilibrium pricing models in both the C\&T and SREC literature is that the price of the certificate (allowance) converges to either the non-compliance penalty or $0$, depending on the compliance of the economy as a whole (see \cite{coulon_khazaei_powell_2015},  \cite{hitzemann2018equilibrium}, \cite{seifert_uhrig-homburg_wagner_2008}, among others). However, this work takes a different approach. While the value of an SREC to an \textit{individual} firm at time $T$ is $0$ (if the firm has complied) or $P$ (if the firm has not complied), this does not imply that the price that the market bears will necessarily be one of these two. This is because the SREC (or indeed, C\&T) market comprises of \textit{many} regulated firms, some of whom will find the certificate valueless, and some of whom will find it worth $P$. The relative proportions of agents who have complied or failed to comply will inform the certificate price and render it in the interval $[0,P]$ rather than $0$ or $P$ only.} 

For any admissible strategy $g,\Gamma\in\mcA$, a  regulated firm's performance criterion (at time $t$) for the single-period problem is 
\begin{multline}
J^{g, \Gamma}(t, b, S) = -\EE_{t,b,S}\left[\int_t^T C(g_u, h_u) \,du  \right.
\\
+ \int_t^T \Gamma_u\, S_u^{g, \Gamma} \,du
+ \left. \tfrac{\gamma}{2} \int_t^T \Gamma_u^2\, du 
+ P (R - b_T^{g, \Gamma})_+\right], \label{eq:pc}
\end{multline}
where   $\EE_{t,b,S}\left[\cdot\right]$ denotes taking expectation conditioned on $b_t = b$ and $S_t=S$ and in the sequel, $\EE_t[\cdot]:=\EE[\cdot|\mathcal{F}_t]$ and $\PP_t[\cdot]:=\PP[\cdot|\mathcal{F}_t]$.
The firm's cost minimization is the strategy which attains the sup (if it exists) below and the value of the optimal strategy is
\begin{equation}
V(t, b, S) = \sup_{(g_s, \Gamma_s)_{s\in[t,T]} \in \mcA} J^{g,\Gamma}(t, b, S). \label{eq:vf}
\end{equation}

In the next section, we characterize the optimal trading strategy and the relationship to SREC price using the stochastic maximum principle as well as the dynamic programming equation approach. 
 
\section{Continuous time approach} \label{optimality}

\subsection{Stochastic Maximum Approach}

One approach to solving \eqref{eq:vf} is through 
the Stochastic (Pontryagin) Maximum Principle (see the seminal works of \cite{kushner1972necessary} and \cite{peng1990general}). Here, we apply the stochastic maximum principle to our  problem along the lines of \cite{hitzemann2018equilibrium}. In doing so, we characterize the optimal controls as a system of coupled equations. The key result is contained in the following proposition.

\begin{proposition}[Optimality Conditions]\label{opt_cond}
The processes $(g,\Gamma)=(g_t, \Gamma_t)_{\tT}$  satisfying
the forward-backward stochastic differential equations (FBSDEs)
\begin{align}
\Gamma_t &= \tfrac{1}{\gamma}\left(M_t - S_0 - \int_0^t (\mu_u + \psi \,g_u)\, du\right),\label{eq:eq_gam}
\\
\Gamma_T &= \tfrac{1}{\gamma} \left(P\, \Id_{\{b_T^{g,\Gamma} < R\}} - S_T^{g,\Gamma}\right),
\\
g_t &= \left(h_t + \tfrac{1}{\zeta}\left(Z_t - \psi \int_0^t \Gamma_u du\right)\right)\Id_{\left\{P\, \PP_t(b_T^{g,\Gamma} < R) \geq - \psi \,\EE_t[\int_t^T \Gamma_u du]\right\}}\,,\label{eq:eq_g}
\\
g_T &= \tfrac{1}{\zeta}\left(P\, \Id_{\left\{b_T^{g,\Gamma} < R\right\}} + \zeta\, h_t\right),
\end{align}
for all $\tT$, where the processes $(M,Z)=(M_t,Z_t)_{\tT}$ are martingales, are the  optimal controls for problem \eqref{eq:vf}.
\end{proposition}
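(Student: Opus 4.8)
The plan is to obtain the system \eqref{eq:eq_gam}--\eqref{eq:eq_g} from the stochastic (Pontryagin) maximum principle applied to \eqref{eq:vf}, written as a \emph{maximization} of $J^{g,\Gamma}$, and then to promote the resulting critical points to maximizers using convexity of the problem in the controls. First I would set up the maximum principle with state $X_t=(S_t^{g,\Gamma},b_t^{g,\Gamma})$, control $u_t=(g_t,\Gamma_t)$, running reward $-\bigl(C(g_t,h_t)+\Gamma_t S_t+\tfrac{\gamma}{2}\Gamma_t^2\bigr)$, terminal reward $-P(R-b_T)_+$, drift $\bigl(\mu_t+\eta\,\Gamma_t-\psi\,g_t,\;g_t+\Gamma_t\bigr)$ and diffusion matrix with rows $(-\psi\,\nu_t,\sigma_t)$ and $(\nu_t,0)$, introducing adjoint pairs $(p^S,q^S)$ and $(p^b,q^b)$ conjugate to $S$ and $b$ and the associated (generalized) Hamiltonian.

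The key structural observation is that neither the Hamiltonian nor the state dynamics depend on $b$ (the holdings enter only through the terminal penalty), so the $b$-adjoint has zero drift: $p^b$ is the martingale with terminal value $p^b_T=P\,\Id_{\{b_T^{g,\Gamma}<R\}}$ (the generalized gradient of $b\mapsto-P(R-b)_+$; the continuous law of $b_T^{g,\Gamma}$ makes the set $\{b_T^{g,\Gamma}=R\}$ null), hence $p^b_t=P\,\PP_t(b_T^{g,\Gamma}<R)$. Similarly the $S$-adjoint has drift $\Gamma_t$ and terminal value $0$, so $p^S_t=-\EE_t\!\bigl[\int_t^T\Gamma_u\,du\bigr]$; in particular $p^S_t$ differs from a martingale by $\int_0^t\Gamma_u\,du$. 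I would then maximize the Hamiltonian pointwise in $u$. For $\Gamma$ this is an unconstrained concave quadratic, so the first-order condition reads $\gamma\,\Gamma_t=-S_t+\eta\,p^S_t+p^b_t$; substituting \eqref{eq:S} for $S_t$ and the representation of $p^S_t$, the $\int_0^t\Gamma_u\,du$ contributions coming from $S_t$ and from $\eta\,p^S_t$ cancel, and collecting all (local-)martingale terms into a single martingale $M_t$ yields an identity of the form \eqref{eq:eq_gam}. For $g$ the maximization is over $[0,\infty)$: interior stationarity gives $\zeta\,(g_t-h_t)_+=p^b_t-\psi\,p^S_t$, which is consistent only when the right-hand side is nonnegative — and $p^b_t-\psi\,p^S_t\ge0$ rearranges exactly to the event $\{P\,\PP_t(b_T^{g,\Gamma}<R)\ge-\psi\,\EE_t[\int_t^T\Gamma_u\,du]\}$ in \eqref{eq:eq_g}; when that quantity is negative the Hamiltonian is strictly decreasing in $g$ on $[0,\infty)$, so the maximizer is the boundary point $g_t=0$ (in particular the form of \eqref{eq:eq_g} makes $g_t\ge0$ automatic). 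Writing $p^S_t$ through its martingale part introduces the martingale $Z_t$ and reproduces \eqref{eq:eq_g}, and the terminal relations for $\Gamma_T$ and $g_T$ are these identities evaluated at $t=T$ using $p^S_T=0$ and $p^b_T=P\,\Id_{\{b_T^{g,\Gamma}<R\}}$.

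To promote these necessary conditions to sufficiency I would argue that $(g,\Gamma)\mapsto -J^{g,\Gamma}$ is convex on the convex set $\mcA$. The terms $C(\cdot,h_t)$ and $\tfrac{\gamma}{2}(\cdot)^2$ are convex; $(g,\Gamma)\mapsto b_T^{g,\Gamma}$ is affine, so $P(R-b_T^{g,\Gamma})_+$ is convex; and, writing $S_u^{g,\Gamma}$ as the uncontrolled price plus $\eta\int_0^u\Gamma_v\,dv-\psi\int_0^u g_v\,dv$, the term $\EE\!\bigl[\int_0^T\Gamma_u S_u^{g,\Gamma}\,du\bigr]$ decomposes as a linear functional of $(g,\Gamma)$, plus the convex functional $\tfrac{\eta}{2}\,\EE\bigl[(\int_0^T\Gamma_u\,du)^2\bigr]$ (here $\eta>0$), plus a bilinear cross term $-\psi\,\EE\bigl[\int_0^T\Gamma_u(\int_0^u g_v\,dv)\,du\bigr]$ that is dominated by the strictly convex generation and trading costs under the model's standing assumptions. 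Convexity of $-J^{g,\Gamma}$ over the convex admissible set makes the first-order conditions also sufficient, so any admissible $(g,\Gamma)$ solving \eqref{eq:eq_gam}--\eqref{eq:eq_g} attains the supremum in \eqref{eq:vf}; alternatively one verifies optimality directly against the value function produced by the dynamic programming equation.

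The hard part is precisely this sufficiency/verification step together with well-posedness of the forward--backward system: the FBSDE is fully coupled and its data are non-Lipschitz (through $\Id_{\{b_T^{g,\Gamma}<R\}}$ and the $(\cdot)_+$ truncations), so existence, uniqueness and the integrability and transversality conditions needed to legitimately invoke the maximum principle do not come from off-the-shelf theorems and have to be extracted from the convex structure (or deferred to the convergent numerical scheme developed later). By contrast, the Hamiltonian maximization and the adjoint computations themselves are routine once the two noise sources and the one-sided constraint $g\ge0$ are handled carefully.
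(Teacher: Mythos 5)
Your proposal is correct and follows the same overall route as the paper: the stochastic maximum principle with adjoints $y_{b,t}=P\,\PP_t(b_T^{g,\Gamma}<R)$ and $y_{S,t}=-\EE_t\left[\int_t^T\Gamma_u\,du\right]$ obtained from linear-driver BSDEs, the same first-order conditions for the Hamiltonian, and the same martingale decompositions producing $M$ and $Z$. Two local steps differ. For the one-sided constraint $g\ge 0$, you note that $g\mapsto -\tfrac{\zeta}{2}((g-h_t)_+)^2+(y_b-\psi y_S)\,g$ is concave on $[0,\infty)$ with continuous derivative, so the maximizer is $h_t+\tfrac{1}{\zeta}(y_b-\psi y_S)$ when $y_b-\psi y_S\ge 0$ and $0$ otherwise; the paper instead truncates the admissible set at an upper bound $U>\sup_{\tT}h_t$, splits into the events $A$, $B$, $C$, proves a lemma identifying $A$ with $\{g_t^\star\ge h_t\}$ and showing $B=\emptyset$, and then sends $U\to\infty$. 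Your argument is shorter and lands on the same formula \eqref{eq:g_eqn}. For sufficiency, the paper simply asserts concavity of the Hamiltonian in the controls and states and invokes the sufficient maximum principle, whereas you argue convexity of $(g,\Gamma)\mapsto -J^{g,\Gamma}$ directly; in both treatments the bilinear interaction between trading and the controlled price (the $-S\Gamma$ term in \eqref{eq:hamiltonian}, equivalently your cross term $-\psi\,\EE\left[\int_0^T\Gamma_u\left(\int_0^u g_v\,dv\right)du\right]$) is the genuinely delicate point, since a bilinear term is not concave on its own, and you are more candid than the paper in flagging that this, together with well-posedness of the coupled FBSDE with discontinuous terminal data, does not follow from off-the-shelf results. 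What each approach buys: the paper's truncation argument is self-contained but longer; your concavity-in-$g$ argument is cleaner; your explicit accounting of where the convexity/verification step could fail is a genuine improvement in rigor-awareness even though it leaves the same gap open.
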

\begin{proof}

The Hamiltonian for the performance criterion \eqref{eq:pc} and state dynamics \eqref{eqn:S-and-b-SDE} is
\begin{multline}
\mathcal{H}(t, b, S, g, \Gamma, \boldsymbol{y}, \boldsymbol{z}) =
- \tfrac{\zeta}{2} ((g - h_t)_+)^2 - S \Gamma - \tfrac{\gamma}{2} \Gamma^2 
\\
+ y_{b}\, (g + \Gamma) + y_{S} (\mu_t + \eta \Gamma - \psi g)  + \sigma_t z_S - \psi \nu_t z_{S,b} + \nu_t z_b,
\label{eq:hamiltonian}
\end{multline}
where $\boldsymbol{y}=(y_b,y_S), \boldsymbol{z} =
\begin{bmatrix}
    z_b       & z_{bS} \\
    z_{Sb}       & z_S
\end{bmatrix}$.

This is  concave in the controls $g,\Gamma$ and state variables $b,S$. Moreover, the adjoint processes $(y_b, y_S)=(y_{b,t},y_{S,t})_{\tT}$ satisfy the BSDEs
\begin{subequations}
\label{eqn:yb-and-ys-SDE}
\begin{align}
dy_{b, t} &= z_{b, t}\, dB_t^{(1)} + z_{bS, t}\, dB_t^{(2)},
&
y_{b, T} = P\, \Id_{\{b_T^{g,\Gamma} < R\}}.
\\
dy_{S, t} &= \Gamma_t \,dt + z_{S,t}\, dB_t^{(2)} + z_{Sb, t}dB_t^{(1)}, 
&
y_{S, T} = 0.
\end{align}
\end{subequations}

The stochastic maximum principle implies that if there exists a solution $(\boldsymbol{\hat{y}}, \boldsymbol{\hat{z}})$ to  \eqref{eqn:yb-and-ys-SDE}, then a strategy $(g, \Gamma)$ that maximizes $\mathcal{H}(t, b, S, g, \Gamma, \boldsymbol{\hat{y}}, \boldsymbol{\hat{z}})$ is the optimal control.

As both BSDEs have linear drivers, their solution is  straightforward  (see \cite{pham2009continuous}, Chapter 6) and given by
\begin{equation}
y_{b, t} = P\, \PP_t(b_T^{g,\Gamma} < R)\,, 
\qquad \text{and}
\qquad
y_{S, t} = -\EE_t\left[\int_{t}^T \Gamma_u du\right]\,.
\label{eqn:y-adjoint-sol}
\end{equation}

Differentiating the Hamiltonian with respect to the controls, we obtain the first order conditions
\begin{subequations}
\begin{align}
\frac{\partial \mathcal{H}}{\partial \Gamma} &: \quad y_{b} + \eta\, y_{S} - S - \gamma\, \Gamma = 0,\qquad \text{and} \\
\frac{\partial \mathcal{H}}{\partial g} &: \quad y_{b} - \psi\, y_{S} - \zeta\, (g - h_t)_+ = 0,
\end{align}%
\end{subequations}%
and substituting the solutions to the adjoint processes \eqref{eqn:y-adjoint-sol}, we obtain the  optimality conditions
\begin{subequations}
\begin{align}
P\,\PP_t(b_T^{g,\Gamma} < R) - \eta\,\EE_t\left[\int_{t}^T \Gamma_u du\right] - S_t^{g,\Gamma} - \Gamma_t \,\gamma &= 0, \qquad \text{and}
\label{eq:optGam}
\\
P \,\PP_t(b_T^{g,\Gamma} < R) + \psi\, \EE_t\left[\int_{t}^T \Gamma_u du\right] - \zeta\, (g_t - h_t)_+ &= 0.
\label{eq:optg}
\end{align}
\end{subequations}
We next, aim to solve these equations by isolating $g$ and $\Gamma$. 

First, from \eqref{eq:optGam} we have
\begin{equation}
Y_t + \eta \int_0^t \Gamma_u du - S_t^{g,\Gamma} = \Gamma_t \gamma, 
\label{eqn:GammaBSDE-step1}
\end{equation}
where $Y=(Y_t)_{\tT}$ is the Doob-martingale defined by
\begin{equation}
Y_t = P\, \PP_t(b_T^{g,\Gamma} < R) - \eta\, \EE_t\left[\int_0^T \Gamma_u du\right]\,.
\end{equation}

Rearranging \eqref{eqn:GammaBSDE-step1} and substituting in \eqref{eq:S}, we arrive at \eqref{eq:eq_gam} where the terminal condition  follows immediately from  \eqref{eq:optGam}, and $M= (M_t)_{t \in [0, T]}$ is the martingale defined by
\begin{equation}
M_t = Y_t - \int_0^t \sigma_u dB_u^{(2)} + \psi \int_0^t \nu_u dBu^{(1)}\,.
\end{equation}

For \eqref{eq:optg}, consider a modification $\mathcal{A}^U$ of the set of admissible  controls $\mathcal{A}$ to controls that admit a finite upper bound $U > \sup_{\tT} h_t$\footnote{Any  bound that is lower is practically meaningless as firms must be able to generate at or more than their `baseline' generation rate.}.


When $g_t \geq h_t$, the solution to \eqref{eq:optg} is
\begin{equation}
    g_t = h_t + \tfrac{1}{\gamma}K_t, \qquad
\text{where}
\qquad
K_t = P\,\PP_t(b_T < R) + \psi\,\EE_t\left[\int_t^T \Gamma_u du\right]
\label{eq:g_soln} 
\end{equation}
Define $g_t^\star:=h_t + \tfrac{1}{\gamma}K_t$. 

When, $g_t<h_t$, the Hamiltonian is maximized at
\begin{equation}
    g_t = \begin{cases} 
   U, & \text{if } K_t \geq 0, \\
   0,       & \text{otherwise}.
  \end{cases} \label{eq:g_star2}
\end{equation}
Denote the sets
\begin{equation*}
    A := \left\{g_t \ge h_t\right\},  \quad
    B:=\left\{\{g_t < h_t\} \cap \left\{ K_t \geq 0\right\}\right\}, \quad \text{and}
    \quad
    C:= \left\{\{g_t < h_t\} \cap \left\{ K_t < 0\right\}\right\}.
\end{equation*}%
These sets satisfy the property that $A = (B \cup C)^c$. From \eqref{eq:g_soln} and \eqref{eq:g_star2}, the optimal generation rate is therefore
\begin{align}
    g_t = g_t^\star\, \Id_{A}+ U\, \Id_{B}.
    \label{g_t_star}
\end{align}
Consider the set $A^\star = \{g_t^\star \geq h_t\}$.
\begin{lemma}
If 
$U>\sup_{\tT}~h_t$, then $A^\star = A$.
\end{lemma}
\begin{proof}

Take an event $\omega\in A$, by \eqref{g_t_star} $g_t(\omega) = g_t^\star(\omega)$ and so $g_t^\star(\omega) \ge h_t$, and hence $\omega\in A^\star$. Therefore $A \subset A^\star$.


Take an event $\omega\in A^\star$, so that $g_t^\star(\omega)\ge h_t$. As
$g_t(\omega) = g_t^\star(\omega) \Id_{\{\omega\in A\}}+ U\, \Id_{\{\omega\in B\}}$ and $U \geq\sup_{\tT} h_t$,  we must have that $g_t(\omega) \geq h_t$, and thus $\omega\in A$. Therefore, $A^\star \subset A$.
\end{proof}

Therefore, we can rewrite \eqref{g_t_star} as follows:
\begin{equation}
    g_t = g_t^\star\, \Id_{A^\star} + U\, \Id_{B}.
    \label{g_t_star_upd}
\end{equation}
Furthermore, $B=\emptyset$ because, from \eqref{g_t_star_upd}, $\omega\in B \implies g_t(\omega) = U > h_t$, so $\omega\notin B$. 

Therefore, we obtain
\begin{equation}
    g_t = \begin{cases} 
   h_t + \frac{1}{\gamma}K_t, & \text{if } K_t \ge 0, \\
   0       & \text{otherwise}.
  \end{cases} \label{eq:g_eqn}
\end{equation}

On the set ${K_t\ge 0}$, by adding and subtracting $\psi \int_0^t \Gamma_u du$ to $g_t$ and letting $Z=(Z_t)_{\tT}$ be the Doob-martingale defined by 
\begin{equation}
Z_t = P \,\PP_t(b_T < R) + \psi\, \EE_t\left[\int_{0}^T \Gamma_u du\right]
\end{equation}
we obtain \eqref{eq:eq_g} with the terminal condition obtained from \eqref{eq:optg}. As (i) this solution is independent of $U$ for all $U>\sup_{\tT}h_t$, (ii) $\sup_{\tT}h_t<\infty$,
and (iii) $\mathcal{A}=\lim_{U\to\infty}\mathcal{A}^U$, this completes the proof.
\end{proof}
We end this subsection with a few comments regarding the results of this proposition and interpretations of the optimality conditions.
In comparison to \cite{hitzemann2018equilibrium}, where the authors develop optimality conditions in a carbon C\&T system, our results shows that the trading penalty and the impact of trading and generation on SREC prices modifies the optimality conditions. When $\eta = \psi = 0$, \eqref{eq:optg} reduces to $P\, \PP_t(b_T < R) = \zeta (g_t - h_t)_+$. This is similar to the result that the marginal cost of generation is equal to the product of the penalty and probability of non-compliance  found in \cite{hitzemann2018equilibrium}. Moreover, when $\eta = \psi = 0$, \eqref{eq:optGam} reduces to $P \,\PP_t(b_T < R) - \gamma \Gamma_t  = S_t$. Thus, in our setup, the  SREC price equals the  penalty scaled by the probability of non-compliance but modified by the optimal trading of the firm.

Similar behavior persists in the general case when $\eta > 0, \psi > 0$. From \eqref{eq:optGam}, the  SREC price equals the penalty scaled by the probability of non-compliance, but modified by the time-$t$ marginal cost of the firm's trading and our expectations of their future trading. That is, low prices are  associated with high rate of trading and high expected future rate of trading.

From \eqref{eq:optg}, the  penalty scaled by the probability of non-compliance equals the difference between the marginal cost of generation and re-scaled (by $\psi$) expected future trading.

As well, from the form  of $g_t$ in \eqref{eq:g_eqn}, the firm  plans to either generate above their baseline or not at all. The decision is contingent on the inequality $P \PP_t(b_T < R) \geq - \psi \EE_t[\int_t^T \Gamma_u du]$. If the inequality is satisfied, then the firm will  generate above their baseline, and if not, they cease to generate. Intuitively, this condition represents whether the firm benefits enough from generation to offset the potentially negative influence of their price impacts. The term $P \PP_t(b_T < R)$ represents the expected non-compliance cost avoided by acquiring an additional SREC, while $- \psi \EE_t[\int_t^T \Gamma_u du]$ represents the value lost through the impact of generating an additional SREC. Generation puts downwards pressure on $S$ (through \eqref{eq:S}), which the firm realizes through their expected trading level over the remainder of the compliance period. Note that $- \psi \EE_t[\int_t^T \Gamma_u du]$ is only positive if expected future trading is negative - that is, the firm expects to be a seller of SRECs.

We outline two simple examples to demonstrate the effect of this property. Consider a firm that is far from compliance, and thus, $\EE_t[\int_t^T \Gamma_u du] > 0$ (that is, the firm expects to purchase SRECs during $[t, T]$). The indicator in \eqref{eq:g_eqn} (and consequently, \eqref{eq:eq_g}) is always satisfied, and the firm will generate above the baseline $h$. This is consistent with the behaviour of a firm that has not reached compliance and is striving to acquire enough SRECs to hit the requirement $R$. Conversely, consider a firm that has SRECs well in excess of $R$ and plans to sell them over the remainder of the compliance period. Here, the indicator is not satisfied. That is, an additional generated SREC will not help the firm's compliance probability significantly (if at all), and additional generation will  decrease the SREC price $S$, reducing the revenue generated by the firm through sales. As such, the firm chooses not to generate at all in such a scenario, and in doing so, mitigates the  negative effect generation has on SREC prices. 

We can solve Equations \eqref{eq:eq_gam}-\eqref{eq:eq_g}  numerically using Least Square Monte Carlo techniques. Instead, we consider a dynamic programming  approach to solving the original problem   \eqref{eq:vf}.

\section{Discrete time version of problem} \label{discrete}

Thus far, we formulated the cost minimization problem of a single regulated firm using continuous time optimal control techniques to characterize the solution and tease out some essential features of the optimal strategy. To obtain numerical solutions, however, we solve a discrete time version of the problem which we find has better numerical stability. Indeed, a discrete time formulation more closely approximates practice, as regulated firms typically take actions only at discrete  time points within a compliance period.

To this end, let $n$ be the number of decision points within a single compliance period,
which occur at $0 = t_1 < t_2 < ... < t_n < T = t_{n+1}$. For simplicity, we assume these are equally spaced so that $t_k=k\Delta t$.

The control processes $(g, \Gamma)$ are now piecewise constant within $[t_{i},t_{i+1})$, and the firm controls $\{g_{t_i}, \Gamma_{t_i}\}_{ i \in \mathfrak{N}}$ where $ \mathfrak{N}:=\{0,\dots,n\}$, so that at each time point, the regulated firm chooses their trading and generating behaviour over the next interval of length $\Delta t$. In this section, $(g, \Gamma)$ represent  vectors whose elements are these controls.

Under the same assumptions as earlier, the performance criterion (corresponding to the total cost) for an arbitrary admissible control is 
\begin{align}
J^{g, \Gamma}(m, b, S)
= \EE_{t_m,b,S}\left[  \sum_{i=m}^n \left\{ \tfrac{\zeta}{2} ((g_{t_i} - h_{t_i})_+)^2   + \Gamma_{t_i} S_{t_i}^{g, \Gamma} + \tfrac{\gamma}{2} \Gamma_{t_i}^2\right\} \Delta t  + P (R - b_T^{g, \Gamma})_+\right],
\label{eq:discretepc}
\end{align}
In the above, the dynamics of the state variables ($b, S$) are modified for discrete time to
\begin{subequations}
\begin{align}
S_{t_i}^{g, \Gamma} &= \min\left(\left(S_{t_{i-1}}^{g, \Gamma} + \left(\mu  + \eta \,\Gamma_{t_{i-1}}  - \psi\, g_{t_{i-1}}\right) \Delta t - \psi \nu \sqrt{\Delta t}\, \varepsilon_{t_i} + \sigma \sqrt{\Delta t}\, Z_{t_i}\right)_+,\; P\right) \label{eq:6}\\
b_{t_i}^{g, \Gamma} &= b_{t_{i-1}}^{g, \Gamma} +  (g_{t_{i-1}} + \Gamma_{t_{i-1}})\Delta t + \nu \sqrt{\Delta t} \varepsilon_{t_i} \label{eq:7}
\end{align}%
\label{eqn:DiscreteStateEvolution}
\end{subequations}%
where $Z_{t_i}, \varepsilon_{t_i} \sim N(0, 1)$, iid, for all $i \in \mathfrak{N}$.

Note that \ref{eq:6} is the discrete time analogue of \eqref{eq:S}  capped at $P$ and floored at $0$. The cap and floor ensures that SREC prices remain in the closed interval $[0,P]$ as prices outside this interval cannot occur in real markets.

We aim to optimize \eqref{eq:discretepc} with respect to $(g, \Gamma)$ and determine the value of the position of the regulated firm, as well as their optimal behaviour. Hence, we seek
\begin{align}
V(t, b, S) = \inf_{g, \Gamma\in\mcA} J^{g, \Gamma}(t, b, S), \label{eq:DiscreteValueFunc}
\end{align}
and the strategy that attains the inf, if it exists. Applying the Bellman Principle to \eqref{eq:DiscreteValueFunc} implies
\begin{subequations}
\begin{align}
\begin{split}
V(t_i, b, S) &= \inf_{g_{t_i}, \Gamma_{t_i}} \biggl\{
\left(\tfrac{\zeta}{2}  ((g_{t_i} - h_{t_i})_+)^2
+ \Gamma_{t_i} S_{t_i}^{g, \Gamma}
+ \tfrac{\gamma}{2} \Gamma_{t_i}^2\right) \Delta t
\\ & \qquad\qquad\quad
+ \EE_{t_i}\left[V\left(t_{i+1}, b_{t_{i+1}}^{g, \Gamma}, S_{t_{i+1}}^{g, \Gamma}\right)\right] \biggr\}, \qquad\qquad \text{and}
\label{eq:Discrete-Bellman}%
\end{split}%
\\
V(T, b, S) &= P (R - b)_+\,. \label{eq:Discrete-Bellman-terminal}
\end{align}%
\label{eqn:DiscreteBellman}%
\end{subequations}%
%
In the next section, we  provide a numerical scheme for solving this optimization problem.

\section{Solution Algorithm and Results} \label{results}

\textcolor{black}{The remainder of the paper is contained in this section and we briefly pause here to provide an overview of its contents.}

\textcolor{black}{We begin by discussing the parameter choices for the numerical experiments we  run. In the best scenario, we would  calibrate our model to real-world data, where possible. This is difficult for several reasons. First,  our framework requires the costs that regulated firms experience in the SREC market, through their planned generation and trading activities. Such information is, however, only known to industry insiders who trade and generate their  SRECs. Instead, for the parameters that do not have clear real-world values to tether to, we choose what we feel are sensible values that allow us to understand the general behaviours and principles that govern this dynamical system.} 

\textcolor{black}{Following the parameter choice, we present our algorithm to solve \eqref{eq:DiscreteValueFunc} in detail. From here, we transition into the presentation of the numerical simulations themselves. Naturally, there are many possible interesting scenarios to simulate. We present only a fraction of what is possible and theoretically interesting here, and leave to the reader to explore others. Instead, we select a handful of interesting scenarios that reveal important relationships between  the firm's optimal behaviour and the state processes, and discuss them in depth. These include an overall summary of the regulated firm's optimal controls as a function of both state variables, simulated compliance periods, summary statistics, and sensitivity analysis. The code for this project can be found \href{https://github.com/AShrivats/SREC_Single_Player.git}{here}.}


\subsection{Parameter Choice}

For the first set of numerical experiments, we use the parameters reported in Tables \ref{tbl:ComplianceParams} and \ref{tbl:ModelParams}.
\begin{table}[h]
\begin{minipage}[c]{0.54\textwidth}
\begin{center}
 \begin{tabular}{ccccc}
\toprule\toprule
 $n$& $T$ &$P$ (\textdollar/SREC) &$R$& ${h_t}$ (SREC/y) \\
 \midrule
 50 & 1&  300 & 500 & 500 \\
\bottomrule\bottomrule
\end{tabular}
\caption{Compliance parameters.
\label{tbl:ComplianceParams}}
\end{center}
\end{minipage}
\begin{minipage}[c]{0.45\textwidth}
\begin{center}
 \begin{tabular}{ccccccc}
\toprule\toprule
 $\mu$ & $\sigma$ & $\nu$ & $\psi$ &$\eta$ &$\zeta$ & $\gamma$ \\
 \midrule
 0 & 10 & 10 & 0.01 & 0.01 & 0.6 & 0.6 \\
\bottomrule\bottomrule
\end{tabular}
\caption{Model Parameters. \label{tbl:ModelParams}}
\end{center}
\end{minipage}
\end{table}

These parameters are chosen for illustrative purposes. As discussed in the preamble of this section, calibration to a particular firm  is itself a non-trivial problem and requires proprietary knowledge of a firm's cost function and baseline production (which  also varies significantly from firm to firm). Instead, we provide broad-level intuition regarding the optimal behaviour of a firm in a single-period SREC market with reasonable parameters. The penalty of $P = \$300$ is informed by the New Jersey SREC market, where the non-compliance penalty in compliance period ending May 2018 is $\$308$ \textcolor{black}{(see \cite{SRECTrade})}. \textcolor{black}{In practice, as the requirement $R$ is based on a proportion of sales for each regulated firm, the specific level of $R$ should be tied to sales. We choose $R$, however, to be exogenous as opposed to a stochastic process depending on electricity sales in order to simplify the analysis (this assumption is present in other works, such as \cite{coulon_khazaei_powell_2015}).} The choice  ${h_t} = \frac{R}{T}$ implies the regulated firm has a probability of $0.5$ to comply if they simply plan to generate at their baseline rate and do not partake in the SREC market. 

The values of $\zeta$ and $\gamma$ are motivated by the upper bounds they imply for $g_t, \Gamma_t$. Specifically, consider the case of a firm that cannot generate enough solar energy to meet the requirements, and hence will fail to comply. The benefit of generating SRECs is to reduce their non-compliance obligation, and with each generated SREC their  obligation is reduced by $P$. Therefore, the costs and benefits of generation over a time-step are (independent of trading activity), respectively,
\begin{equation}
    K_1(g_t) = \tfrac{1}{2}\zeta((g_t - {h_t})_+)^2\Delta t, \qquad\text{and} \qquad
    B_1(g_t) = P g_t \Delta t\,.
\end{equation}

The firm generates energy in order to minimize $N_1(g_t) := K_1(g_t) - B_1(g_t)$ which occurs at $g_t^* = \tfrac{P}{\zeta} + {h_t}$. For the chosen parameters, $g^* = 1,000$ which is exactly twice the baseline rate ${h_t}$. In other words, this choice of $\zeta$ ensures the firm's maximum generation rate is bounded by twice their baseline.

We conduct a similar exercise for $\Gamma_t$. Consider a firm that will fail to comply.
In this scenario, a rational firm will purchase SRECs. As before, the benefit of a firm purchasing SRECs is to reduce their non-compliance obligation, with each generated SREC reducing the obligation by $P$. As such, the costs and benefits to purchase over the next time-step are (independent of generation activity):
\begin{equation}
    K_2(\Gamma_t) = \left(\tfrac{1}{2}\gamma\,\Gamma_t^2 + S_t\, \Gamma_t\right)\Delta t,
    \qquad\text{and}\qquad
    B_2(\Gamma_t) = P\, \Gamma_t\, \Delta t,
\end{equation}
respectively. The firm purchases in order to minimize $N_2(\Gamma_t) := K_2(\Gamma_t) - B_2(\Gamma_t)$ which occurs at $\Gamma_t^* = \tfrac{P - S}{\gamma}$. For the chosen parameters, this is maximized when $S=0$ and results in $\Gamma^* = 500$. The significance of this computation is to show we have chosen parameters that result in a reasonable upper bound on the amount of trading a firm will partake in.

Repeating the same exercise for a firm that is guaranteed to comply (and thus is motivated to sell), we  obtain $g_t^* = 0$ (due to price impacts of generation) and $\Gamma_t = -\tfrac{S}{\gamma}$ which is maximized (in absolute value) at $-500$ for the chosen parameters.

For the parameters in Tables \ref{tbl:ComplianceParams} and \ref{tbl:ModelParams}, this simple analysis shows that generation and trading rates are restricted to the range $g_t \in [0, 1000]$ and $\Gamma_t \in [-500, 500]$, which is a reasonable range of possible values given our choices of ${h_t}$ and $R$.

We set $\eta = 0.01, \psi = 0.01$ to demonstrate the effect of price impact. We justify these choices in a similar manner to the above. While $\eta$ and $\psi$ do not need to be equal in our model, generation and purchasing are substitutes for one another, and thus, it is logical to consider them as having equal market impacts. Above, we discussed natural bounds for the control processes, given the compliance and model parameters in $\cref{tbl:ComplianceParams}$ and $\cref{tbl:ModelParams}$. For these parameter choices, price impact parameters $\eta = 0.01, \psi = 0.01$ results in upper bounds for net price impacts of $500 \times 0.01 \times \tfrac{1}{50} = 0.1$ per time-step for trading and $1,000 \times \ 0.01 \times \tfrac{1}{50} = 0.2$ per time-step for generation. These are sizeable impacts for a single firm to have, but not so large that the model seems implausible.


In \Cref{sensitivity}, we consider other parameters. In particular, we explore how various levels of $\zeta, \gamma$ impact firm behaviour and the effect of other price impact parameters ($\psi \neq 0, \eta \neq 0$). In the following subsection, we detail our algorithm to solve the dynamic programming problem outlined in \Cref{discrete}.

\subsection{Numerical Scheme} \label{num_scheme}
We use the following numerical algorithm for solving \eqref{eqn:DiscreteBellman} with state variable dynamics in \eqref{eqn:DiscreteStateEvolution}:
\begin{enumerate}

\item Choose a grid of $b$ and $S$ values denoted by $\mfG$. We use a uniform grid of  $401$ points in $b$ from $0$ to $2R$, so that $R$ is on the grid, and a uniform grid of $S$ with $\Delta S = \sqrt{3 \Delta t} \sigma$ and lower and upper bounds of $0$ and $P$ respectively. In this manner, the number of grid points in $S$ is tuned to the volatility over a time-step\footnote{As with any numerical solution, there is a trade-off between grid size (accuracy of the dynamic program solution) and  run-time. The grid we use provides an acceptable trade-off between these two, and we observed no further increase in accuracy by increasing the grid size.}

\item Minimize \eqref{eq:Discrete-Bellman} at $i=n$ (corresponding to $t=T-\Delta t$) with respect to $(g_{t_n}, \Gamma_{t_n})$ for every point in $\mfG$.

    To do this, we require an estimate of  $\EE_{t_n}\left[V\left(t_{n+1}, b_{t_{n+1}}^{g_{t_n}, \Gamma_{t_n}}, S_{t_{n+1}}^{g_{t_n}, \Gamma_{t_n}}\right)\right]$
for each 

$(S_{t_n},b_{t_n})\in\mfG$. This is achieved by simulation as follows:
    \begin{enumerate}[label=\emph{\Alph*.}]
\item Select a value $b_{t_n}\in\mfG$. As the terminal condition is independent of $S_{t_{n+1}}^{g_{t_n}, \Gamma_{t_n}}$, the optimal controls and value function for $b_{t_n}$ will be the same for all values of $S_{t_n}$. That is, the evolution of the SREC price is unimportant at the last time-step.

    \begin{enumerate}
    \item Select a candidate pair $(g_{t_n},\Gamma_{t_n})$
    \begin{enumerate}[label=\emph{(\alph*)}]
        \item Simulate $100$ scenarios of $b_{t_{i+1}}^{g_{t_i}, \Gamma_{t_n}}$ using \eqref{eq:7}, -- use the same set of random numbers for all points in $\mfG$.

        \item For each simulated $b_{t_{n+1}}^{g_{t_n},\Gamma_{t_n}}\,$, calculate the one-step-ahead 
        
        value function $V\left(t_{n+1}, b_{t_{n+1}}^{g_{t_i},\Gamma_{t_n}},S_{t_{n+1}}^{g_{t_n},\Gamma_{t_n}}\right)$ through the terminal condition \eqref{eq:Discrete-Bellman-terminal}

        \item Use the empirical mean of the result of (c) as an estimate of the true mean at $b_{t_n}$, which will be the same regardless of $S_{t_n}$.
    \end{enumerate}
    \item Use Matlab's \texttt{fmincon} function to determine next candidate pair $(g_{t_i},\Gamma_{t_i})$ and repeat from (i) until converged, store optimal pair and value function.
    \end{enumerate}

    \item Go to next grid point in $\mfG$ repeat from A.
\end{enumerate}

\item Step backwards from $i+1$ to $i$, by minimizing \eqref{eq:Discrete-Bellman} with respect to $(g_{t_i}, \Gamma_{t_i})$ at time $t_i$ for all points in $\mfG$.

To do this, we require an estimate of  $\EE_{t_i}\left[V\left(t_{i+1}, b_{t_{i+1}}^{g_{t_i}, \Gamma_{t_i}}, S_{t_{i+1}}^{g_{t_i}, \Gamma_{t_i}}\right)\right]$
for each 

$(S_{t_i},b_{t_i})\in\mfG$. This is achieved by simulation as follows:
\begin{enumerate}[label=\emph{\Alph*.}]
\item Select a pair $(S_{t_i},b_{t_i})\in\mfG$

    \begin{enumerate}
    \item Select a candidate pair $(g_{t_i},\Gamma_{t_i})$
    \begin{enumerate}[label=\emph{(\alph*)}]
        \item Simulate $b_{t_{i+1}}^{g_{t_i}, \Gamma_{t_i}}$ using \eqref{eq:7}, -- use the same set of random numbers for all points in $\mfG$.

        \item Simulate $100$ scenarios  of $S_{t_{i+1}}^{g_{t_i},\Gamma_{t_i}}$ by applying \eqref{eq:6} -- use the same set of random numbers for all points in $\mfG$. 

        \item For each simulated pair of $(b_{t_{i+1}}^{g_{t_i},\Gamma_{t_i}}\,,\,S_{t_{i+1}}^{g_{t_i},\Gamma_{t_i}})$, estimate the one-step-ahead value function $V\left(t_{i+1}, b_{t_{i+1}}^{g_{t_i},\Gamma_{t_i}},S_{t_{i+1}}^{g_{t_i},\Gamma_{t_i}}\right)$ by interpolation.

        \item Use the empirical mean of the result of (c) as an estimate of the true mean at $(b_{t_i},S_{t_i})$.
    \end{enumerate}

    \item Use Matlab's \texttt{fmincon} function to determine next candidate pair $(g_{t_i},\Gamma_{t_i})$ and repeat from (i) until converged, store optimal pair and value function.
    \end{enumerate}

    \item Go to next grid point in $\mfG$ repeat from A.
\end{enumerate}

\end{enumerate}

This procedure provides an estimate of the value function at all grid points $\mfG$ and at all times $\mathfrak{T}:=\{t_i\}_{i\in\mathfrak{N}}$, as well as the optimal generation and trading rates on $\mfG\times\mathfrak{T}$.

\textcolor{black}{In the following subsections, we apply this scheme in a variety of simulation studies to learn more about the optimal behaviours of the regulated firms, and their associated implications.}

\subsection{Optimal Behaviours of a Regulated Firm}

A regulated firm's optimal behaviour is one of the key outputs from solving the Bellman equation. \Cref{fig:opt_beh} shows the dependence of the optimal trading and generation rate on banked SRECs for three SREC prices at six points in time. 
\begin{figure}[!htp]
\centering
\includegraphics[align = c, width=0.6\textwidth]{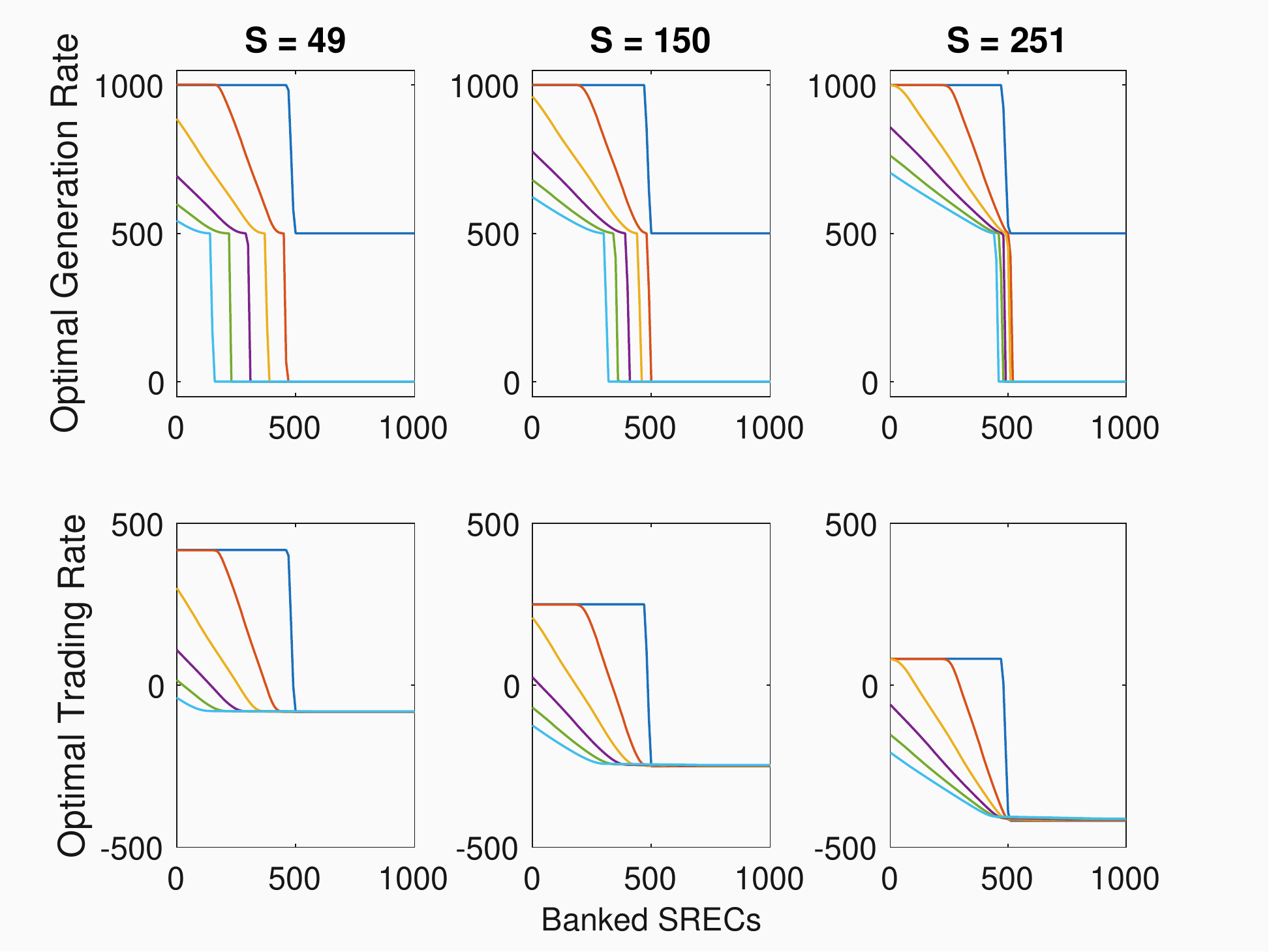}
\includegraphics[align = c, width=0.1\textwidth]{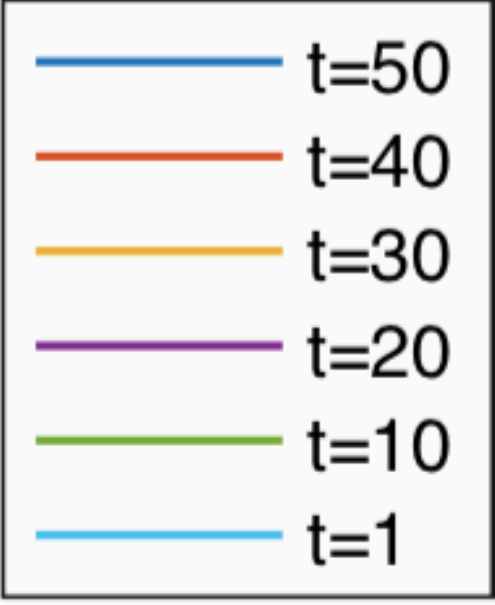}
\caption{Optimal firm behaviour (top panel: generation rate, bottom panel: trading rate) as a function of banked SRECs for various time-steps and SREC market prices. Parameters in Tables \ref{tbl:ComplianceParams} and \ref{tbl:ModelParams}.}
\label{fig:opt_beh}
\end{figure}

The most notable feature is the distinct regimes of generation/trading. For low levels of banked SRECs and near the terminal date, the firm generates/purchases until the marginal cost of producing/purchasing another SREC exceeds $P$, as the firm is almost assured to fail to comply. This follows the classic microeconomic adage of conducting an activity until the marginal benefit from the activity equals the marginal cost. In this regime, the marginal benefit of an additional SREC to the firm is $P$, as each additional SREC lowers their non-compliance obligation by $P$.

As the banked amount increases, the firm reaches a point where the marginal benefit from an additional SREC decreases from $P$. This occurs as the probability of compliance becomes non-negligible, as additional SRECs in excess of $R$ provide smaller marginal benefit than $P$. This is a result of the sale price of an SREC being bounded above by $P$ and leads to a decrease in optimal generation and optimal trading. The firm adjusts its behaviour so that its marginal costs are in line with this marginal benefit. This eventually leads to the firm selling as opposed to purchasing SRECs, as the net proceeds from the sale exceed the marginal value of retaining those certificates.

This decrease continues until the firm no longer benefits from additional SRECs. That is, at a certain level of banked SRECs $b$, the marginal benefit of an additional SREC is zero. Specifically, having an additional SREC does not increase the firm's likelihood of compliance, nor can they sell the additional SREC to make a profit. Accordingly, at all time-steps except $t = 50$, we observe that optimal generation jumps downwards once a certain level of $b$ is achieved. This would not occur if price impacts were inactive ($\eta = \psi = 0$). This is consistent with the theoretical results in \Cref{optimality}, where we showed that the optimal generation is either (i) greater than or equal to $h_t$, or (ii) identically $0$ (see \eqref{eq:g_eqn}). Recall, the condition for the firm to choose to generate is $P \PP_t(b_T^{g,\Gamma} < R) \geq - \psi \EE_t[\int_t^T \Gamma_u du]$. Thus, the firm chooses to shut down after reaching a threshold level of $b$ (which depends on both $t$ and $S$). Intuitively, this threshold is point at which an additional SREC is worth less to the firm than the impact that additional generation would have on the firm through its effect on $S$. As generation lowers SREC price, a firm that has already complied and plans to sell off remaining SRECs is pushing the market against themselves by continuing to generate. As such, there is a point where it is instead optimal to shut down production entirely and sell. This does not occur at $t = 50$, as it is the last decision point, and thus the price impact of generation does not impact the firm in any way.

Trading is influenced by a change in SREC price, which is in accordance with our intuition and aligns with the theoretical results from \Cref{optimality}. As SREC prices increase, the regulated firm chooses to purchase less, regardless of banked SRECs. We also see that higher SREC prices generally imply higher generation, as the firm chooses to generate their own SRECs, either to avoid paying high prices for them in the market, or to sell in the market and capitalize on the high prices (which of these two factors is the larger contributor depends on how much is banked).

If we hold $b, S$ constant, generation and purchasing are increasing in $t$. This is natural when the firm's compliance is not guaranteed, as with less time until the end of a compliance period, the firm needs to accumulate more SRECs in order to comply. For values of $b$ and $S$ for which compliance is guaranteed, we note that this property will not always hold, and is dependent on the value of $\gamma$. This is covered in more detail in \Cref{sens_gen_trade}. 

We also note that the optimal trading rate each of the `plateaus' varies slightly with time-step. This can be seen more clearly in \Cref{fig:sg_pi_demonstration},
where we enlarge the bottom-right plot in \Cref{fig:opt_beh} for two different areas of $b$ as an illustration of this property. That is, we plot optimal trading behaviour for each of the six time-steps detailed in \Cref{fig:opt_beh} when $S = 251$, $b \in [0, 40]$ (left) and $b \in [500, 1000]$ (right). 
\begin{figure}[!t]
\centering
\includegraphics[align = c, width=0.4\textwidth]{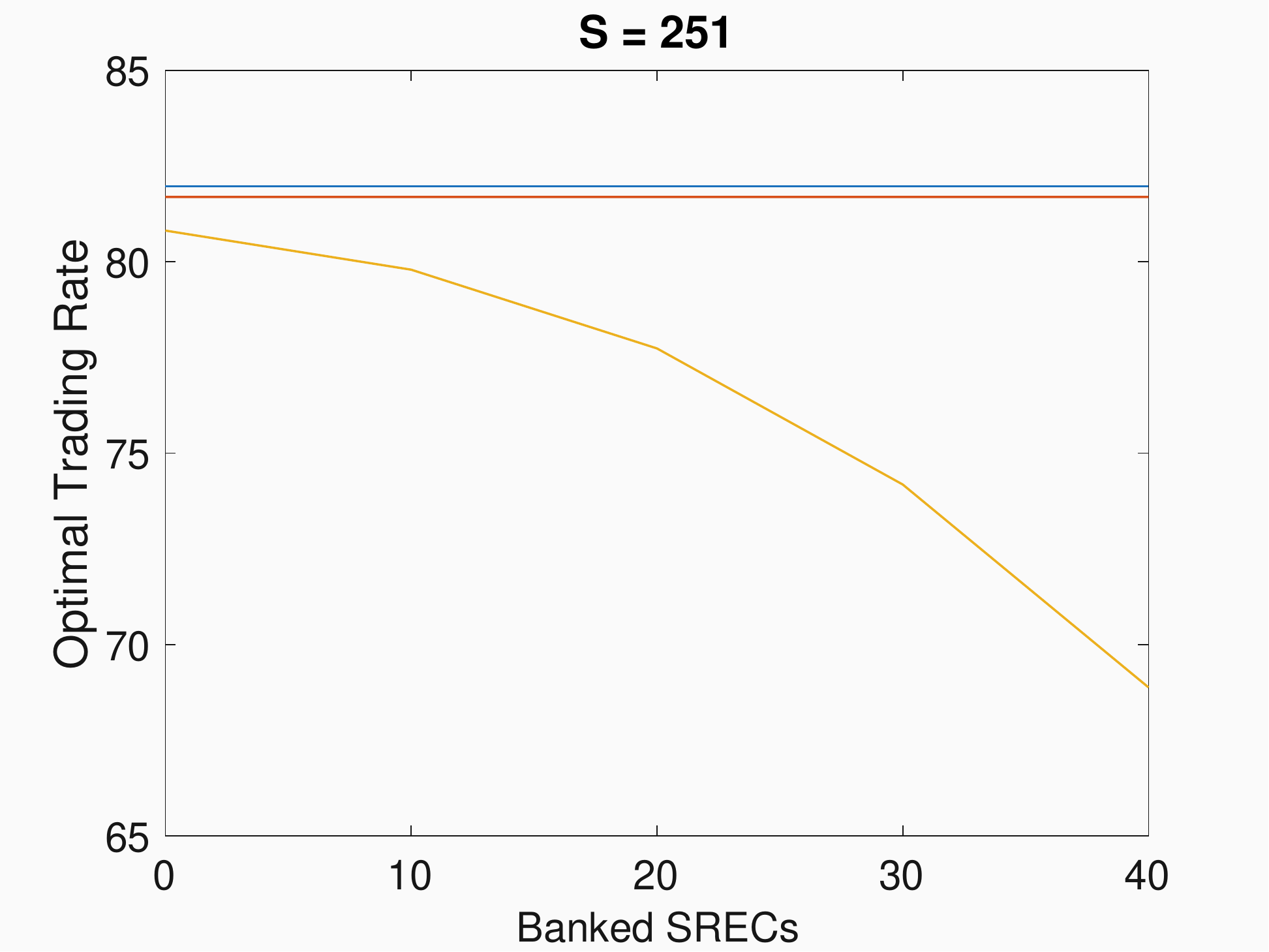}
\includegraphics[align = c, width=0.4\textwidth]{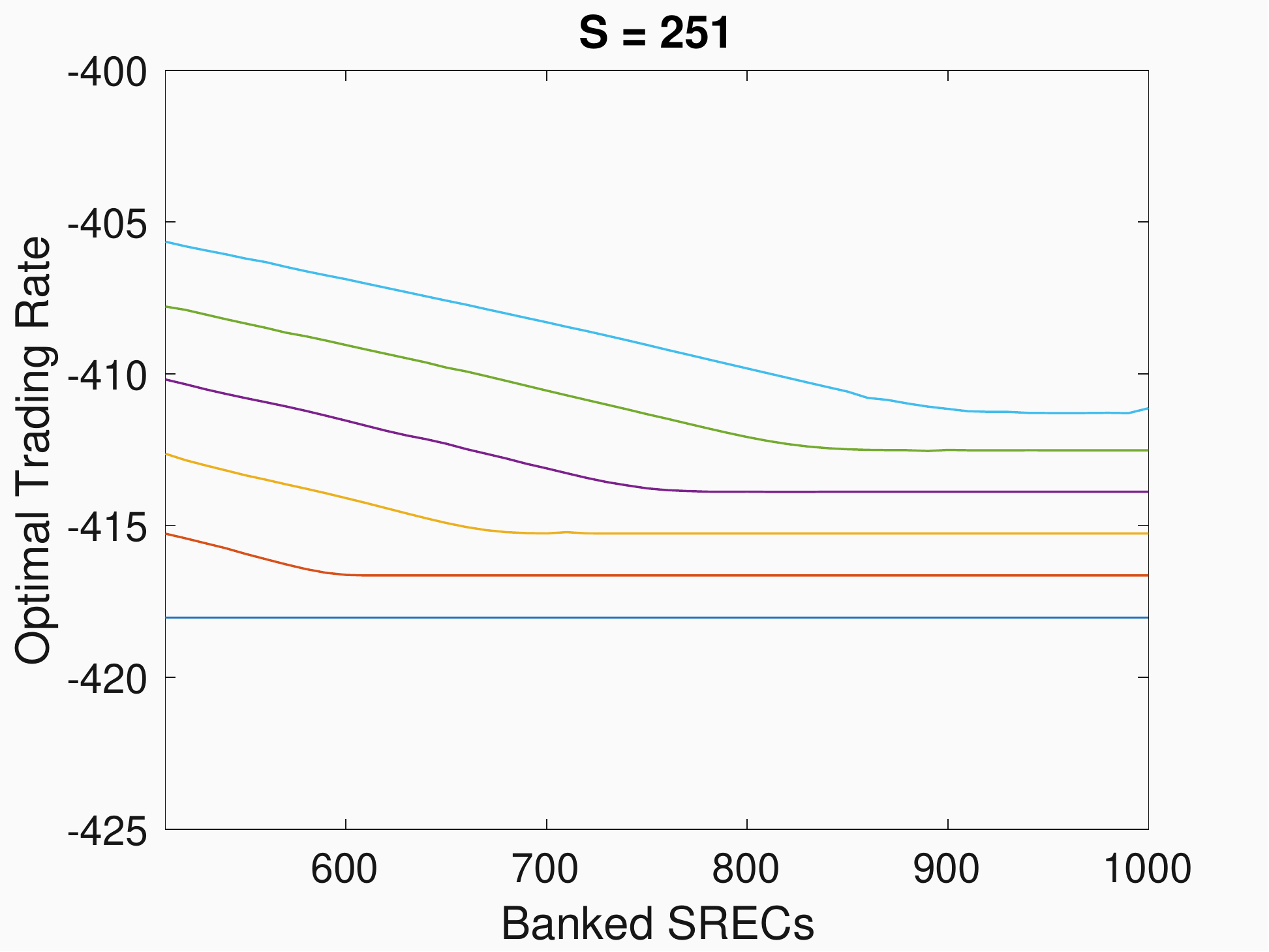}
\includegraphics[align = c, width=0.1\textwidth]{legend.pdf}
\caption{Optimal trading behaviour with price impact parameters $\eta=0.01$, $\psi=0.01$ for $b \in [0, 40]$ (left), $b \in [500, 1000]$ (right) and $S_0 = 250$ as a function of banked SRECs for various time-steps. All remaining parameters in Tables \ref{tbl:ComplianceParams} and \ref{tbl:ModelParams}.}
\label{fig:sg_pi_demonstration}
\end{figure}
As \Cref{fig:sg_pi_demonstration} illustrates, at high levels of banking ($b > R$), firms sell less at earlier time-steps than they do at later time-steps. Firms do this to mitigate the impact that their selling has on the SREC price and limit the extent to which the market moves against the firm as a result of their trading behaviour. The inverse behaviour occurs for low banking levels. Firms purchase less at earlier time-steps in order to keep prices down (relative to what would occur if they did not) and make compliance more attainable. At low banking levels, they also generate more, for the same reason (not shown to avoid repetition). These effects are proportional to the magnitude of $\eta$ and $\psi$, and increase with $S$. The generation rate for large values of $b$ does not vary with time-step due to the lower bound of generation being $0$. 

\subsection{Sample Paths} \label{sample_paths}

In \Cref{fig:path}, we show the dynamics of optimal firm behaviour through the compliance period. Here, $S_0 = 150$, $b_0 = 0$, and we simulate a path for $S$ and $b$ and at each time-step along this path, we adopt the optimal firm strategy in accordance with their banked SRECs and the SREC price.
\begin{figure}[t!]
\centering
\includegraphics[width=.75\textwidth]{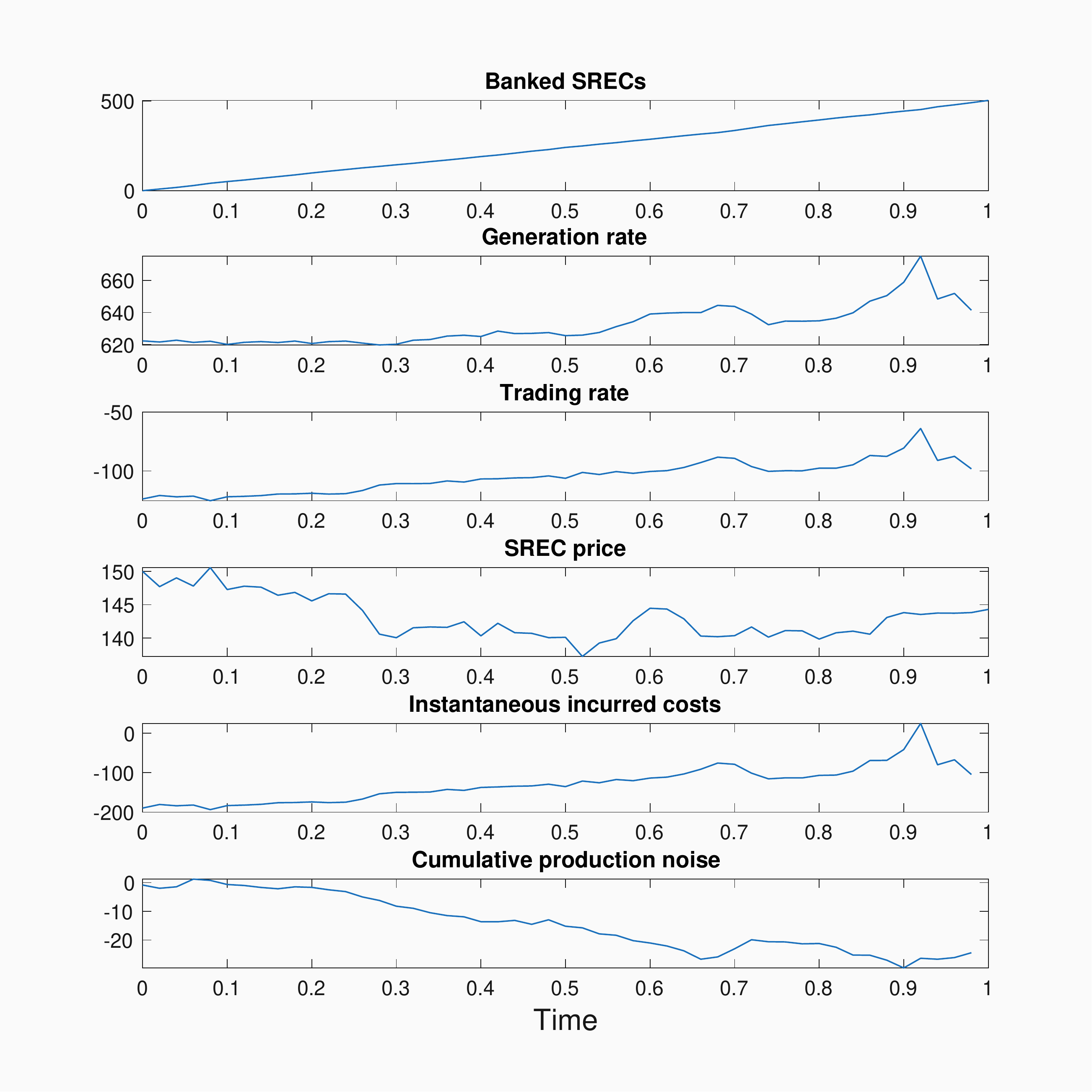}
\caption{A sample path of optimal firm behaviour with initial condition $S_0 = 150$ and $b_0 = 0$, and all remaining parameters in Tables \ref{tbl:ComplianceParams} and \ref{tbl:ModelParams}.}
\label{fig:path}
\end{figure}
From \Cref{fig:path}, the regulated firm banks SRECs at what appears to be a steady rate, and in this sample path, the firm reaches compliance. We will see shortly that the latter does not always occur. While banked SRECs appear to be linear, there is some variation in the amount of SRECs the firm banks at each time-step, which is the result of the SREC production noise the firm experiences. If we were to plot $b_t - \tfrac{Rt}{T}$ as a measure of the firm's SREC inventory versus the pro-rated amount they would need to be on-track to comply, we would see a roughly similar shape to the firm's cumulative production noise over the course of the period.


Turning our attention to the other subplots, we see the generation and trading processes exhibit notable variation over time. In particular, the inverse relationship between SREC price and trading rate is evident at earlier points in the period. Similarly, we can observe a positive relationship between SREC price and planned generation rate during the same time frame. However, as the period progresses,  generation and trading begin to move in the same direction, regardless of $S$ and its movements. This occurs as the randomness associated with SREC generation buffets the firm and changes their banked SRECs from one time-step to another in a way that cannot be foreseen. As $t$ approaches $T$, the firm has less time to adjust for this unforeseen noise resulting in the observed firm behaviour later in the period. The firm may have significantly more or less SRECs than what their planned generation and trading activity would suggest, and thus, they must determine whether they need to increase their SREC acquisition rate (increase planned generation and purchase more) or decrease their SREC acquisition rate (decrease planned generation and sell more) in order to behave optimally. We re-state that excess SRECs above $R$ expire valueless, so there is incentive for the firm to liquidate excess SRECs if in a strong position for compliance. 

\textcolor{black}{The SREC price itself is also pushed downwards throughout the period by the actions of the agent. As the agent is generating SRECs and selling them, the SREC price is lower than what it would be if we had set $\eta = \psi = 0$.} 

In Figure \ref{fig:path}, we see that cumulative production noise (the lowest subplot) decreases for the vast majority of the period. This means the firm generates less than planned in this time. As a reaction to this, they increase their planned generation and trading over the period in order to reach compliance, constantly reacting to their under-generation to put themselves back on track to achieve compliance. In general, towards the end of the period, increases (decreases) in cumulative production noise incite the firm to decrease (increase) planned generation and decrease (increase) purchasing of SRECs. 


The fifth panel in \Cref{fig:path} shows instantaneous incurred costs (IIC), which is the running cost incurred to the firm at each time-step:
\begin{equation}
    IIC_i = \left(\tfrac{\zeta}{2}  ((g_{t_i} - h_{t_i})_+)^2
+ \Gamma_{t_i} S_{t_i}^{g, \Gamma}
+ \tfrac{\gamma}{2} \Gamma_{t_i}^2\right) \Delta t.
\end{equation}
For the parameters chosen in Figure \ref{fig:path} and the resulting optimal behaviours, IIC is negative at all time-steps, which signifies that the firm is making a profit in the system, due to their sale of SRECs.

Next, by performing multiple simulations, we investigate the distribution of various quantities of interest, including total SRECs $b_T$, total planned generation $\int_0^T g_u du$, total traded amount $\int_0^T \Gamma_u du$, and total profit (negative of costs). For the base-line parameter choice, and with initial condition $b_0=0$, $S_0 = 150$, we present summary statistics using  $1,000$ simulated paths of $S$ and $b$ in \Cref{tbl:sum_stat_simple}.
\begin{table} [h]
\begin{center}
 \begin{tabular}{crrrrrr}
 \toprule\toprule
 Statistic & Mean & Std.Dev &1st Quartile  & 3rd Quartile & Skewness & Kurtosis \\
  \midrule
 $b_T$ & 501.61 & 1.62 & 500.50 & 502.74 & -0.02 & 2.69 \\
 $\int_0^T g_u du$ &621.95  & 6.59 & 617.46 & 626.25 & -0.001 & 3.02 \\
 $\int_0^T \Gamma_u du$ &-120.10  &6.31  & -124.34 & -115.79 & 0.02 & 2.86 \\
 Profit & 8,730.00  & 940.00 &  8,080.00 & 9,360.00 & 0.02 & 2.91 \\
\bottomrule\bottomrule
\end{tabular}
\end{center}
\caption{Summary statistics using 1,000 sample paths of $S$ of banked amount, generation, trading, and profit following the optimal strategy with initial condition $S_0=150$, $b_0=0$ and all remaining parameters in Tables \ref{tbl:ComplianceParams} and \ref{tbl:ModelParams}.} \label{tbl:sum_stat_simple}
\end{table}

In this one-period setup, the firm's optimal behaviour results in a symmetric distribution centred just above the requirement of $500$. There are cases (approximately 25\% of simulations) where the firm fails to comply ($b_T < 500$). Since $b$ (conditional on the firm's controls) is stochastic, and there is no advantage to additional SRECs above the requirement in a single-period framework, firms must strike a balance between being certain of compliance and wasting funds planning to generate or purchase SRECs over the requirement that may potentially end up unused. As such, for these parameters, the optimal firm plans to acquire (represented by $\int_0^T (g_u + \Gamma_u) du$) slightly more than the requirement of $500$, providing themselves with some buffer throughout the period in the event that they produce less than planned. However, this buffer is not so large that the firm is guaranteed to always comply.

\subsection{Parameter Sensitivity} \label{sensitivity}

In this section we investigate how varying parameters affect the optimal behaviour and resulting summary statistics, and explore the intuition behind the resulting effects.

\subsubsection{Sensitivity to Price Impact} \label{pimpacts}

In this section, we explore the impact of changing $\eta$ and $\psi$ on the optimal controls of the regulated firm. To do this, we compare an optimally behaving firm in a single-period model that is subject to various price impact scenarios to the baseline scenario of $\eta = \psi = 0.01$. We consider the $(\eta, \psi)$ pairs of $\{(0, 0), (0.01, 0.01), (0.02, 0.02)\}$. To do this, we simulate $1,000$ paths of $S$ in each price impact scenario, using the same random numbers in each scenario for $S$\footnote{While the random numbers used to generate paths are identical, the presence of price impact leads to different paths as impact varies. } and $\varepsilon$. In each path of $S$, we calculate total generation, total trading, and profit for the firm, and the difference between each quantity and their analogous amount under the baseline scenario. We calculate the mean and standard deviation of these differences across all paths, for each scenario. For example, for a pair $(\eta,\psi)$ we compute Profit($\eta,\psi$)-Profit($0.01,0.01$) across all scenarios and report the mean and standard deviation. In the first row of Table \ref{tab:p_impact_tbl} we report the raw results for  the case $\eta=\psi=0.01$, while rows 2--3 report the results for the difference relative to the benchmark  for the $2$ remaining pairs of $(\eta,\psi)$.

\begin{table}[htbp]
\color{black}
  \centering
    \begin{tabular}{rrrrrrrrrrr}
    \toprule
    \toprule
    \multicolumn{1}{c}{$\eta$} & \multicolumn{1}{c}{$\psi$} &       & \multicolumn{2}{c}{$\int_0^T g_u \,du$} &       & \multicolumn{2}{c}{$\int_0^T \Gamma_u \,du$} &       & \multicolumn{2}{c}{Profit} \\
\cmidrule{4-5}\cmidrule{7-8}\cmidrule{10-11}          &       &       & \multicolumn{1}{l}{mean} & \multicolumn{1}{l}{std.dev.} &       & \multicolumn{1}{l}{mean} & \multicolumn{1}{l}{std.dev.} &       & \multicolumn{1}{l}{mean} & \multicolumn{1}{l}{std.dev.} \\
\cmidrule{1-2}\cmidrule{4-5}\cmidrule{7-8}\cmidrule{10-11}
0.01 & 0.01 & & 621.95 & 6.59 & & -120.10 & 6.31 &  & 8,730 & 954.51 \\\midrule
0     & 0     &       &    3.69  &        0.25  &       & -    4.07  &             0.24  &       &        440  &         58  \\
    0.02     & 0.02 &       & -4.04  &        0.25  &       &             -    3.92  &             0.27  &       & -430  &           57  \\
    \bottomrule
    \bottomrule
    \end{tabular}%
\caption{Mean and standard deviation of differences in quantities of interest between an optimally behaving firm under various price impact scenarios and an optimally behaving firm subject to the baseline scenario of $\eta = \psi = 0.01$. We use 1,000 sample paths of $S$, with initial condition $S_0 = 150$ and remaining parameters as in Tables \ref{tbl:ComplianceParams} and \ref{tbl:ModelParams}.  }
  \label{tab:p_impact_tbl}%
\end{table}%


\textcolor{black}{Decreasing $\eta$ and $\psi$ to 0, which removes the impact of the regulated firm in the market altogether, allows the firm to generate more and sell more without any fear of pushing the price downwards and the market against them. This results in a higher profit during the compliance period.}

Increasing $\eta$ and $\psi$ to 0.02 each results in lower generation, less selling, and consequently, lower profit. This is the result of the firm attempting to mitigate their price impact through sales, resulting in lower generation as a consequence (so as to not end up with a large amount of surplus SRECs). In general, price impacts lead to a feedback loop, as the firm's behaviour of generating and selling lowers prices, which further incentivize decreased selling and planned generation (as seen in \Cref{fig:opt_beh}).

\subsubsection{Sensitivity to Trading and Generation Costs}\label{sens_gen_trade}

To conclude our analysis of the single period model, we explore sensitivity to generation and trading speed costs ($\zeta$ and $\gamma$). \Cref{fig:TradingGenerationCost} shows how the optimal behaviour changes for various values of
$\zeta$ and $\gamma$, across six time-steps, for fixed SREC price level $S_t = 150$.

\begin{figure}[t!]
\centering
\begin{subfigure}{0.26\textwidth}
\includegraphics[align = c,width=\textwidth]{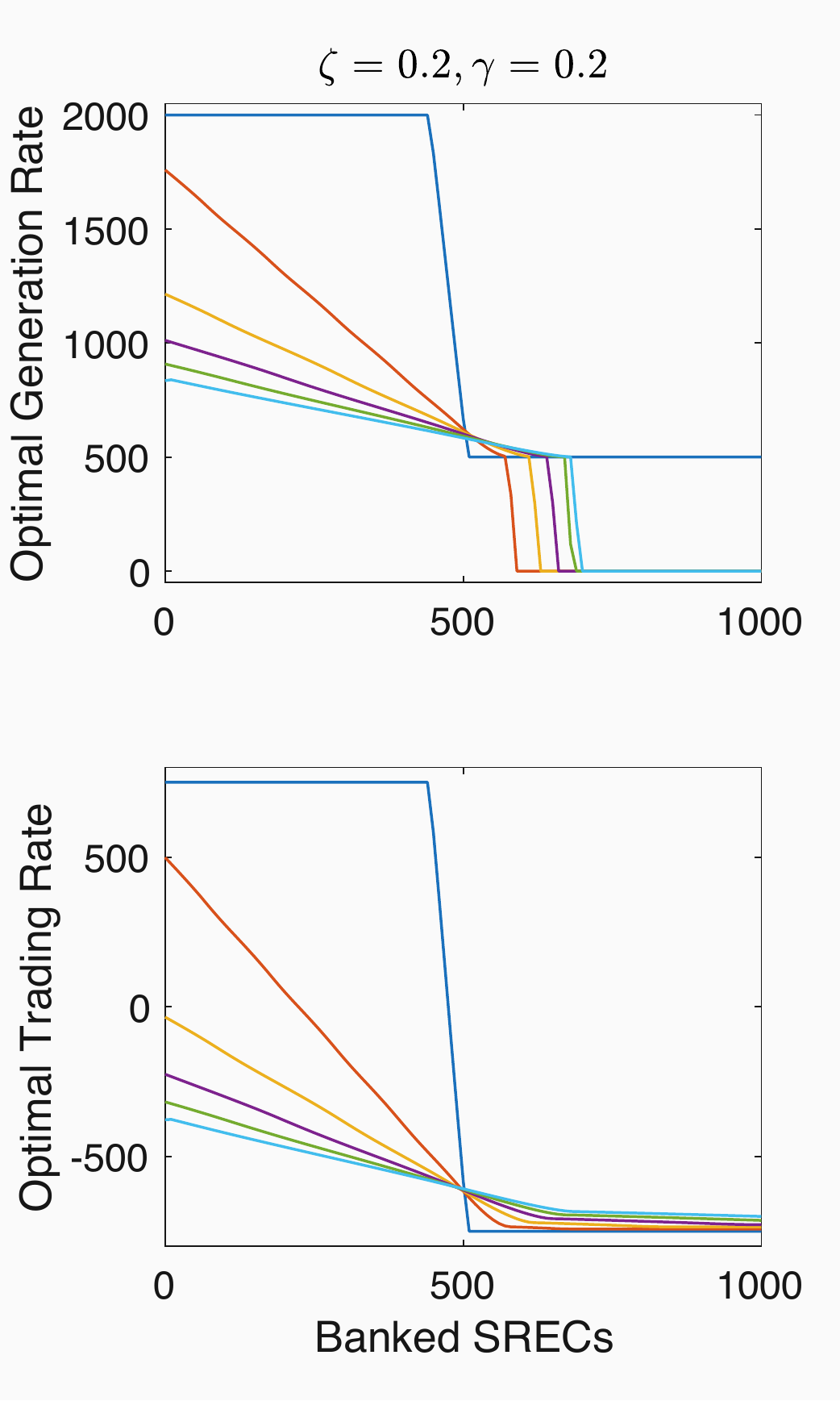}
\caption{\footnotesize$(\zeta, \gamma) = (0.2, 0.2)$}
\end{subfigure}
\begin{subfigure}{0.26\textwidth}
\includegraphics[align = c,width=\textwidth]{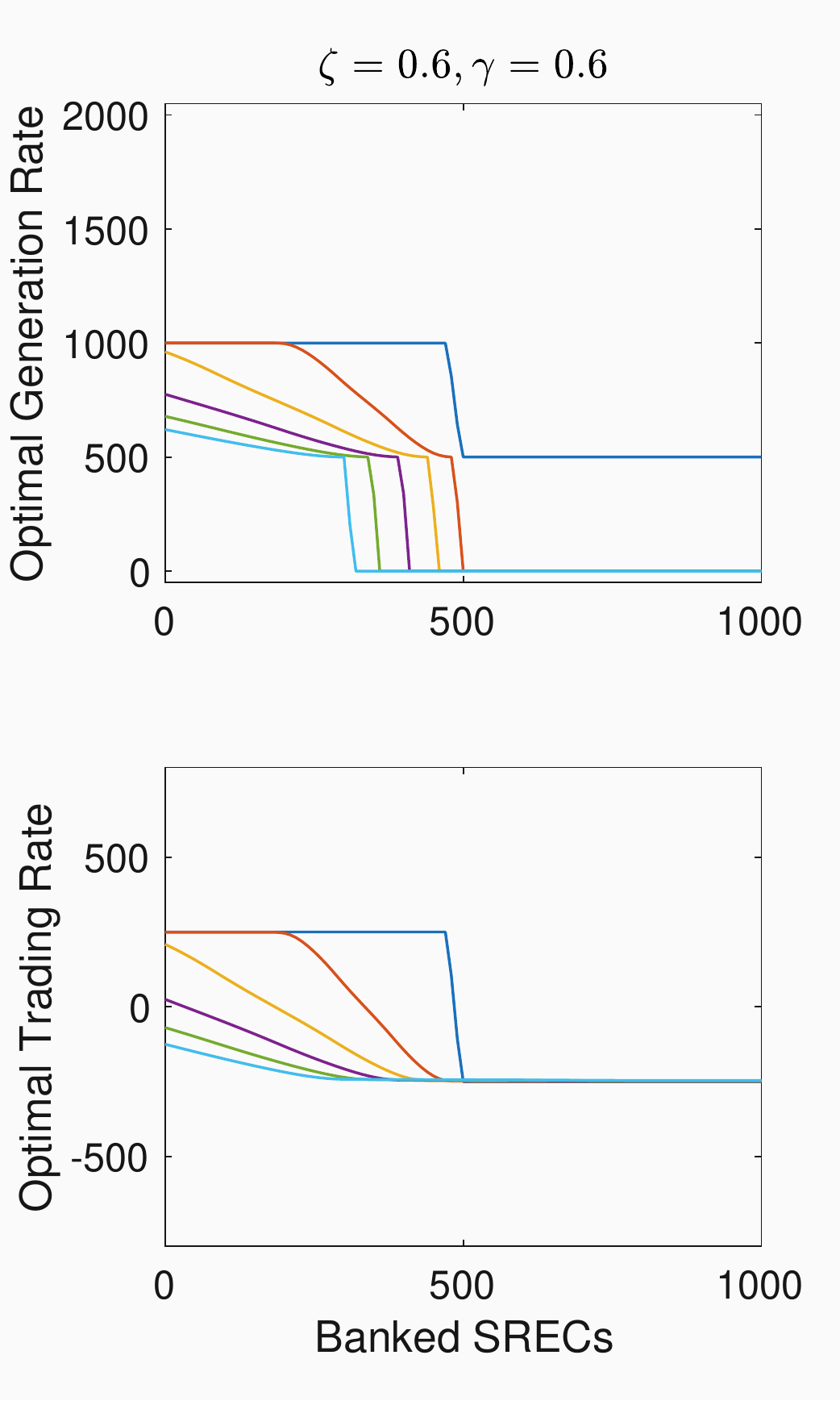}
\caption{\footnotesize$(\zeta, \gamma) = (0.6, 0.6)$}
\end{subfigure}
\begin{subfigure}{0.26\textwidth}
\includegraphics[align = c,width=\textwidth]{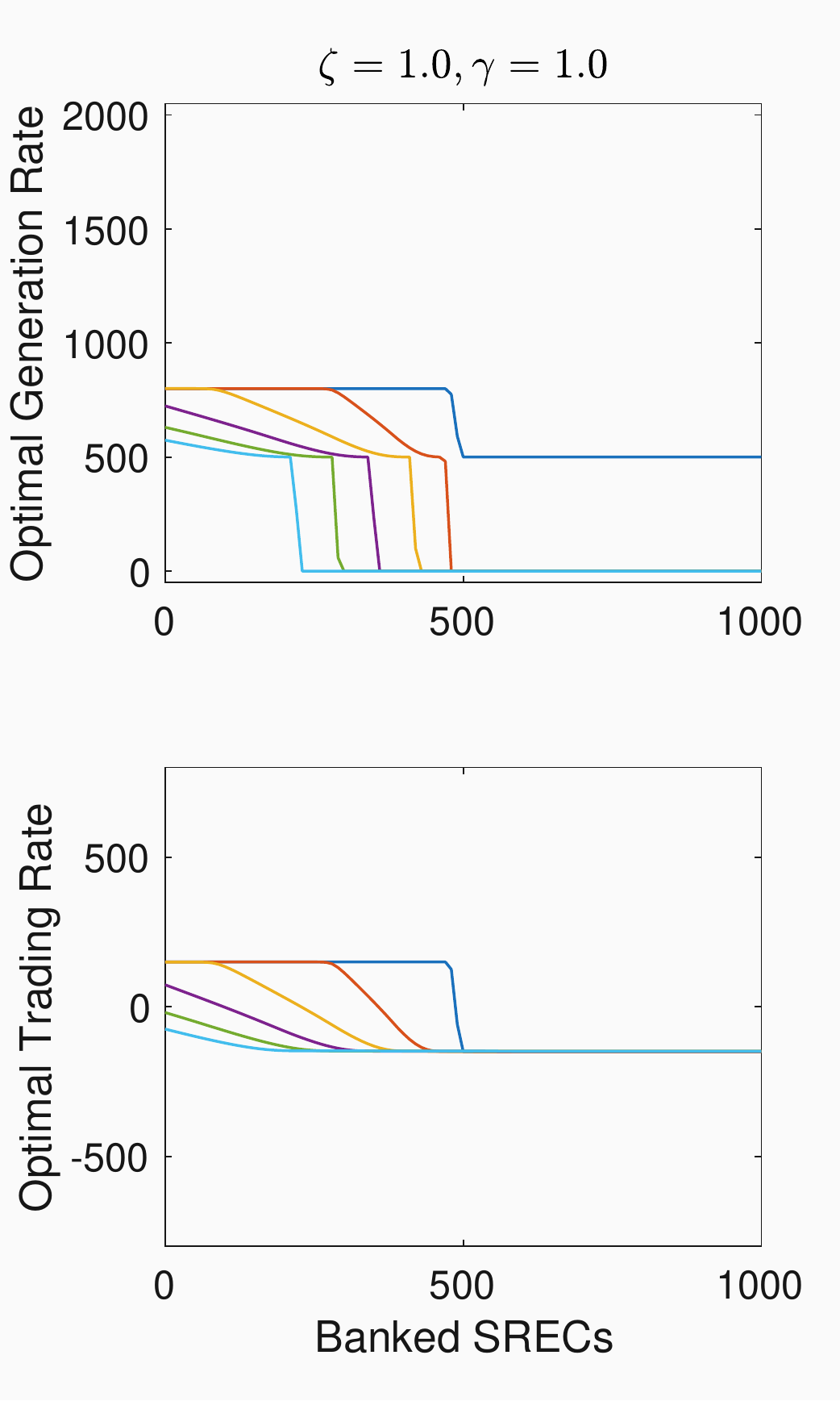}
\caption{\footnotesize$(\zeta, \gamma) = (1, 1)$}
\end{subfigure}
\begin{subfigure}{0.1\textwidth}
\includegraphics[align = c,width=\textwidth]{legend.pdf}
\end{subfigure}
\caption{Optimal generation and trading rates for differing levels of $\zeta$ (generation cost parameter) and $\gamma$ (trading speed penalty parameter) when $S_t=150$.   Remaining parameters as in Tables \ref{tbl:ComplianceParams} and \ref{tbl:ModelParams}.}
\label{fig:TradingGenerationCost}
\end{figure}

The middle subplots in \Cref{fig:TradingGenerationCost}  show the firm's optimal behaviour in the default setting of $\zeta = 0.6, \gamma = 0.6$ as in \Cref{tbl:ModelParams}. Increasing/decreasing  $\zeta, \gamma$ compresses/expands the range of optimal trading and planned generation.  This is the result of higher/lower parameters corresponding to higher/lower costs and decreased/increased capacity of the firm to invest in generation and to trade.


Finally,  \Cref{fig:TradingGenerationCost}(a) shows that when $b$ is above $R$, optimal planned generation and purchasing are larger at earlier time-steps than later time-steps. This is the result of small $\gamma$ leading to low trading costs, and the firm can  aggressively sell excess SRECs  before $T$. Hence, at earlier time-steps, the firm continues to generate above their baseline in order to acquire more SRECs to sell later in the period. Later in the period, the firm prefers to liquidate their excess SRECs in order to ensure they do not have excess inventories at time $T$, resulting in the observed behaviour. This does not happen in the cases where $\gamma = 0.6$ or $1$ as the firm is limited in how quickly it can viably liquidate excess SRECs by its trading speed penalty.

We do not include the plots of $(\zeta, \gamma)$ combinations where $\zeta \neq \gamma$ to avoid repetition. The results and interpretation are identical to those discussed above, with changes in $\zeta$ impacting optimal generation and changes in $\gamma$ impacting optimal trading.

\subsection{Multi-period model} \label{multi_per}

Thus far, we have considered a single period compliance framework. In practice, SREC markets consist of multiple periods. In this section, we present the results for an $N$-period SREC market, which is described in \Cref{model}. Much of the behaviour and intuition discussed in the earlier parts of this section carry over to the multi-period case. For the multi-period formulation, we assume  there are $n$ (equally spaced) decision  points within each compliance period denoted
\begin{equation}
    0 = t_1 < \dots < t_n < T_1 = t_{n+1} < \dots < t_{2n} < T_2  = t_{2n +1} < \dots < t_{nN} < T_N = t_{nN +1},
\end{equation}
where $t_k = k\Delta t$. The last time-step $t_{Nn+1}$ is not a decision point. Therefore, there are $n\times N$ decision points, from $t_1, ..., t_{Nn}$. We will use the notation $\mathcal{T}:=\{T_1,\dots,T_N\}$ to denote the set of compliance times. 

As before, we continue assuming $P$ and $R$ are constant across each of the $N$ periods, and  the processes $g_t, \Gamma_t$ are piecewise constant within $[t_i, t_{i+1})$, with the firm controlling $\{g_{t_i}, \Gamma_{t_i}\}_{i \in \mathfrak{R}}$, where $\mathfrak{R} = \{0, ..., n\times N\}$. As in \Cref{discrete},  regulated firm choose their trading and generating behaviour at the start of the time interval.

The end points of the $i$-th period is $T_i$, $i=1,\dots,N$, and  firms may bank unused certificates with no expiry. In real SREC markets, certificates generally have a finite life-time, but allowing indefinite banking reduces the dimensionality of the problem significantly and renders it computationally tractable.  
The performance criterion (corresponding to the total cost) for an arbitrary admissible control is
\begin{align} \label{eq:discretepc_multi}
\begin{split}
J^{g, \Gamma}(k, b, S)=&
\EE_{t_k,b,S}\biggl[\sum_{i=k}^{Nn} \left\{ \tfrac{\zeta}{2} ((g_{t_i} - h_{t_i})_+)^2   + \Gamma_{t_i} S_{t_i}^{g, \Gamma} + \tfrac{\gamma}{2} \Gamma_{t_i}^2\right\} \Delta t  \\ 
&\quad\quad+ \sum_{j=1}^N P (R - b_{t_{nj}}^{g, \Gamma} - \Delta t (g_{t_{nj}} + \Gamma_{t_{nj}})-\nu \sqrt{\Delta t}\, \varepsilon_{t_{nj+1}})_+ \, \Id_{\{t_k < t_{nj +1}\}}\biggr].
\end{split}
\end{align}
The dynamics of the state variables ($b, S$) are modified as follows
\begin{subequations}
\begin{align}
S_{t_i}^{g, \Gamma} &= \min\left(\left(S_{t_{i-1}}^{g, \Gamma} + \left(\mu  + \eta \,\Gamma_{t_{i-1}}  - \psi\, g_{t_{i-1}}\right) \Delta t - \psi \nu \sqrt{\Delta t}\, \varepsilon_{t_i} + \sigma \sqrt{\Delta t}\, Z_{t_i}\right)_+\;,\; P\right) \label{eq:S_discrete_multi}
\\
b_{t_i}^{g, \Gamma} &= 
\left\{
\begin{array}{ll}
b_{t_{i-1}}^{g, \Gamma} +  (g_{t_{i-1}} + \Gamma_{t_{i-1}})\Delta t + \nu \sqrt{\Delta t}\, \varepsilon_{t_i}, & t_i\notin \mathcal{T}
\\[0.5em]
\left( b_{t_{i-1}}^{g, \Gamma} +  (g_{t_{i-1}} + \Gamma_{t_{i-1}})\Delta t + \nu \sqrt{\Delta t}\, \varepsilon_{t_i}-R\right)_+, & t_i\in \mathcal{T},
\end{array}
\right.
 \label{eq:b_discrete_multi} 
\end{align}%
\label{eqn:DiscreteStateEvolutionMulti}
\end{subequations}%
where $Z_{t_i}, \varepsilon_{t_i} \sim N(0, 1)$, iid, for all $i \in \mathfrak{N}$.

As in the single-period case, we seek
\begin{align}
V(t, b, S) = \inf_{g, \Gamma\in\mcA} J^{g, \Gamma}(t, b, S), \label{eq:DiscreteValueFuncMulti}
\end{align}
and the strategy that attains the inf, if it exists. Applying the Bellman Principle to \eqref{eq:DiscreteValueFuncMulti} implies
\begin{subequations}
\begin{align}
\begin{split}
V(t_i, b, S) &= \inf_{g_{t_i}, \Gamma_{t_i}} \biggl\{
\left(\tfrac{\zeta}{2}  ((g_{t_i} - h_{t_i})_+)^2
+ \Gamma_{t_i} S_{t_i}^{g, \Gamma}
+ \tfrac{\gamma}{2} \Gamma_{t_i}^2\right) \Delta t
\\ & \qquad\qquad
+ \EE_{t_i}\left[ P (R - b_{t_{i}}^{g, \Gamma} - \Delta t (g_{t_i} + \Gamma_{t_i}) - \nu \sqrt{\Delta t} \epsilon_{t_{i+1}})_+ \right]  \Id_{\{t_{i+1}\in\mathcal{T}\}}
\\ & \qquad\qquad
+ \EE_{t_i}\left[V\left(t_{i+1}, b_{t_{i+1}}^{g, \Gamma}, S_{t_{i+1}}^{g, \Gamma}\right)\right] \biggr\}, \qquad\qquad \text{and}
\label{eq:Discrete-Bellman_multi}%
\end{split}%
\\
V(T_N, b, S) &= 0. \label{eq:Discrete-Bellman-terminal_multi}
\end{align}%
\label{eqn:DiscreteBellman_multi}%
\end{subequations}%

The dynamics of $b$ in the multi-period framework are such that $b_{T_j}$ represents the firm's SRECs \textbf{after} submitting the compliance requirement for the compliance period ending at $T_j$. We adjust our solution algorithm described in \Cref{num_scheme} to account for the assumptions stated above, using the same model parameters, and choosing $N = 5$. We denote the current period by $m$. As the algorithm for obtaining the optimal controls in the multi-period problem is very similar to that detailed in \Cref{num_scheme}, we omit it here.

\subsubsection{Sample results in the Multi-period model} \label{mpm_pi}

Analogous to \Cref{fig:opt_beh}, \Cref{fig:multi-period-impact} 
shows the optimal behaviour of a regulated firm as a function of banked SRECs, across three different prices of $S$ and at six points in time during the first compliance period when there is price impact.
\begin{figure}[h]
\centering
\includegraphics[align = c, width=0.6\textwidth]{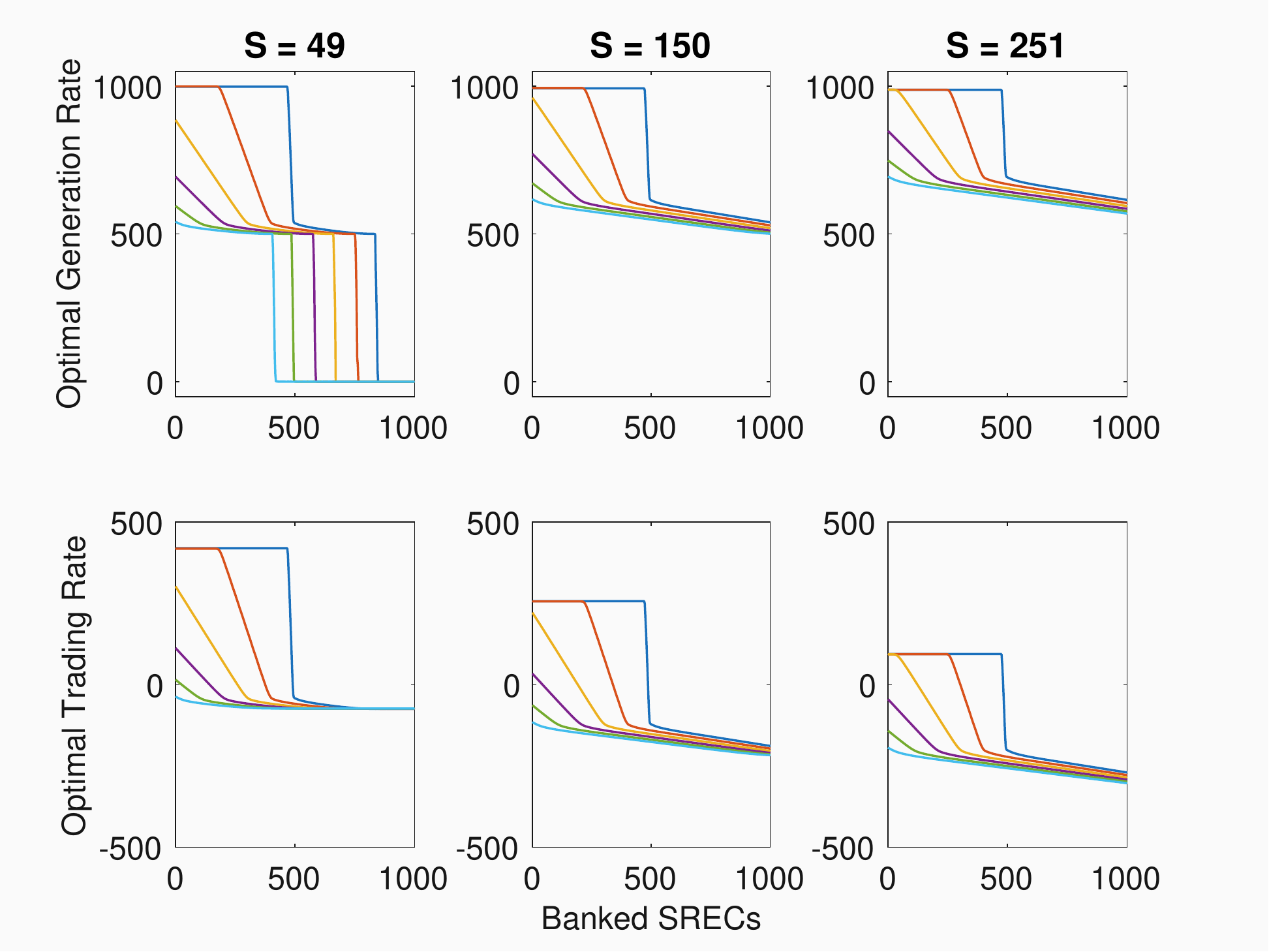}
\includegraphics[align = c, width=.1\textwidth]{legend.pdf}
\caption{Optimal firm behaviour as a function of banked SRECs across various time-steps (during the first of five compliance periods) and SREC market prices with parameters as in Tables \ref{tbl:ComplianceParams} and \ref{tbl:ModelParams}.}
\label{fig:multi-period-impact}
\end{figure}

In \Cref{fig:multi-period-impact}, we plot the dependence of the optimal generation and trading rate of the firm in the first period ($m = 1$) of the $5$-period model against banked SRECs, for three SREC prices, at six points in time, with all remaining parameters as in Tables \ref{tbl:ComplianceParams} and \ref{tbl:ModelParams}.
Much of the intuition surrounding \Cref{fig:opt_beh} applies here. There are, however, obvious differences between Figures \ref{fig:opt_beh} and \ref{fig:multi-period-impact}. As before, for low levels of banked SRECs, across all values of $S$, and near the end of the compliance period, the firm generates until the marginal cost of producing another SREC exceeds $P$, and purchases until the marginal cost of purchasing another SREC exceeds $P$, as the firm is almost assured to fail to comply. In this regime, the marginal benefit of an additional SREC   is $P$, as each additional SREC lowers their non-compliance obligation by $P$.

As the banked amount increases, the firm reaches a point where the marginal benefit from an additional SREC decreases from $P$. This occurs as the probability of compliance becomes non-negligible, as additional SRECs in excess of $R$ provide smaller marginal benefit than $P$. This leads to a decrease in optimal generation and optimal trading, as the firm adjusts its behaviour so that its marginal costs are in line with this marginal benefit. Thus far, this is the same interpretation as the single-period setting.
As $b$ continues to increase, the firm holds sufficient banked SRECs such that they will be able to acquire surplus certificates above $R$. These surplus SRECs have little value in the current period to the firm, even including their use as insurance for extreme under-generation. They may, however, bank SRECs putting the firm in a better position for future compliance periods. In the single-period case, at the end of the compliance period, holding additional SRECs lack utility. The concept of banking means that this is not true in the multi-period case, and thus we see an abrupt change in the slope of the optimal controls, and a slower decay in generation and purchasing rate when compared to \Cref{fig:opt_beh}.

This decrease continues until the firm no longer benefits from additional SRECs. That is, at a certain level of $b$, the marginal benefit of an additional SREC is zero. Specifically, having an additional SREC does not increase the firm's likelihood of compliance in current or future periods, nor can the firm sell the additional SREC for a profit (taking into account their trading costs and $S$). As in \Cref{fig:opt_beh}, this results in optimal generation dropping to $0$ and optimal trading plateauing at the level where the marginal revenue from trading equals the marginal cost. This plateau is not visible in every subplot in \Cref{fig:opt_beh_mult} due to axis limits and the fact that $m = 1$. The impact of SREC price on generation and trading is similar to the single period case.

As $m$ increases, the firm has fewer future periods to position themselves for. Consequently, the firm's optimal planned generation and purchasing behaviour decays more quickly for larger $m$. See Appendix \ref{additional_figures},  \Cref{fig:multi-period-23_pi,fig:multi-period-45_pi} for the analogous figures for $m = 2$, $3$, $4$, and $5$. The optimal controls when $m=5$ are identical to the single-period case as they must be since the performance criterion is time-consistent.

\Cref{fig:5_per_path} shows a sample path of the optimal strategy for three firms (with the same cost functions, and experiencing the same randomness in $b$ and $S$) throughout the course of the 5-period SREC market, with each period lasting $1$ year.  The firms differ in their initial banked amount: Firm 1 has $b_0 = 0$, Firm 2 has $b_0 = 250$ and Firm 3 has $b_0 = 500$. We set  $S_0= 150$. At each time-step, each firm behaves optimally given their values of banked SRECs and the SREC price. Each firm exists in a separate `universe' and they do not have an impact on one another. Moreover, they are subject to the same realized randomness from the Brownian motions impacting the SREC price and their own SREC generation.
\begin{figure}[!t]
\centering
\includegraphics[width=0.75\textwidth]{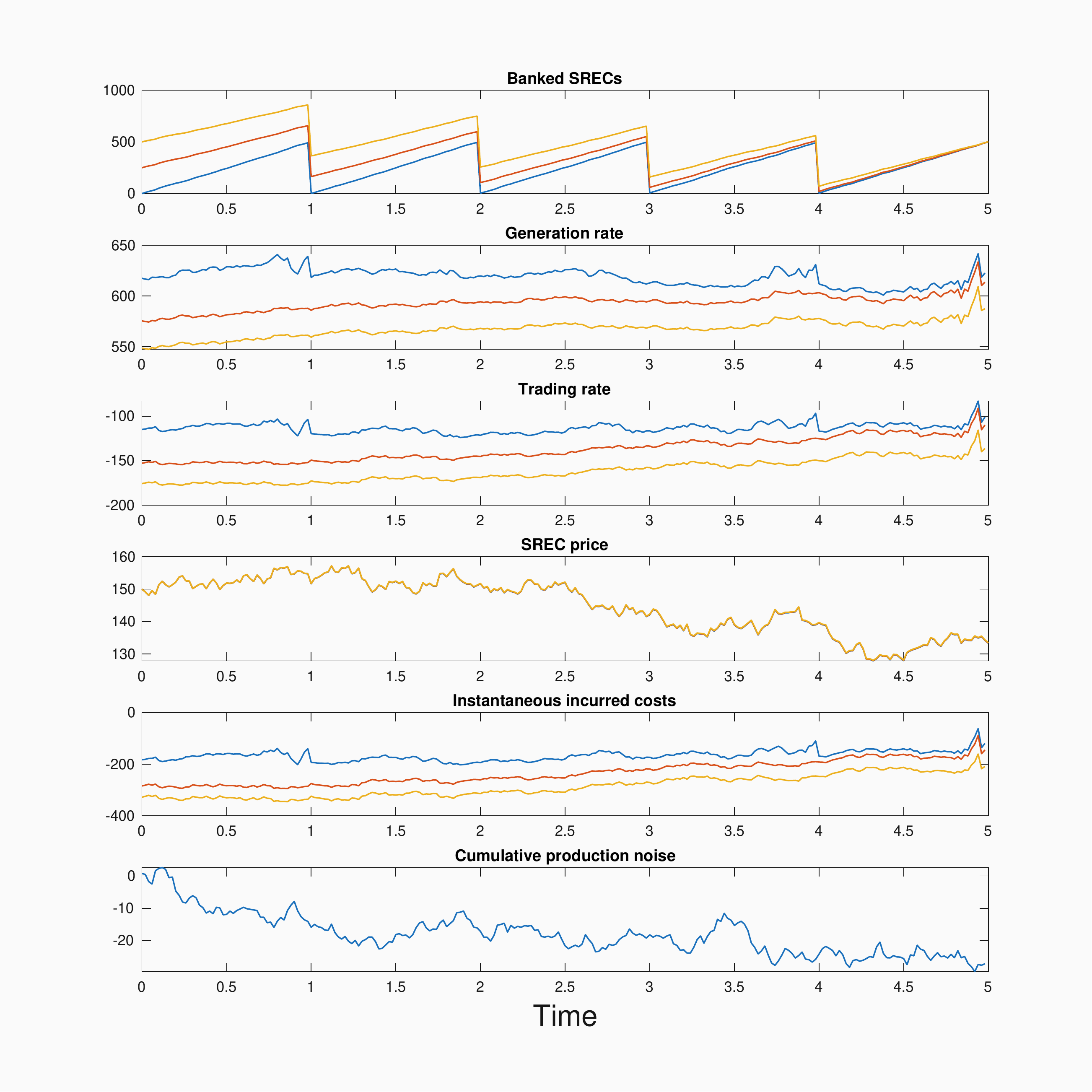}
\caption{Paths of three optimally behaving firms in a 5-period compliance system with $S_0 = 150, b_0 = 0$ (blue), $b_0 = 250$ (red), $b_0 = 500$ (yellow). Parameters as in Tables \ref{tbl:ComplianceParams} and \ref{tbl:ModelParams}.}
\label{fig:5_per_path}
\end{figure}

We see the banked SRECs for all three firms converge roughly to $R=500$ as $t\rightarrow 5$. Consequently, Firm 3 accumulates SRECs at a slower rate than Firm 2, who accumulates SRECs at a slower rate than Firm 1. Even with the firm impacted by production noise, the path of $b$ appears steady within each compliance period for the firms, as before. The large drops are the effect of the firm submitting SRECs for compliance at the end of each period. This results in the \textit{converging saw-tooth} pattern in the first subplot of \Cref{fig:5_per_path}.

\textcolor{black}{The optimal behaviours of each firm follow roughly the same pattern, suggesting that they react similarly to changes in $S$. The difference in their behaviours is primarily due to their initial banked SRECs $b_0$. Firm 1 has no spare SRECs at $t = 0$, and generates the most and sells the least. Firm 3 has $500$ spare SRECs at $t = 0$ -- enough for an entire period of compliance. As such, they produce the least and sell the most. Firm 2 operates between Firm 1 and Firm 3. Naturally, Firm 3 profits the most from this system, due to their initial position. All three firms slow down generation and purchasing behaviour near the final time-steps, reacting to unexpected generation noise that has resulted in them generating more than planned in the time-steps immediately prior. This occurs at the ends of non-terminal compliance periods for Firm 1, as they are typically right on the border of compliance at each period, due to their small initial inventory. The other firms have SREC balances above $R$ and, as  banking is allowed, there is no need for a firm to liquidate excess banked SRECs early. }

\textcolor{black}{The optimal behaviours of each firm also imply different SREC prices in each of the `universes' that each firm exists in. However, the magnitude of price impacts for an individual firm are small enough that visually, the price processes look almost identical. In fact, there is a difference of about \$0.07\footnote{At the terminal time-step} between the price path for Firms 1 and 2, and \$0.10 between the price path for Firms 2 and 3. Firm 3 has the highest price, as they are taking the least extreme generation and trading behaviour. Firm 1 has the lowest price, for the opposite reason. }

Finally, we simulate many paths of $S$ with $S_0 = 150, b_0 = 250$ in order to obtain summary statistics and learn about the distribution of various quantities for each firm. In Figure \ref{fig:histograms_mult_price_impact} we plot the histograms of total generated SRECs and total traded SRECs for a regulated firm in each period, based on $1,000$ such sample paths.

\begin{figure}[h!]
\centering
\includegraphics[width=0.7\textwidth]{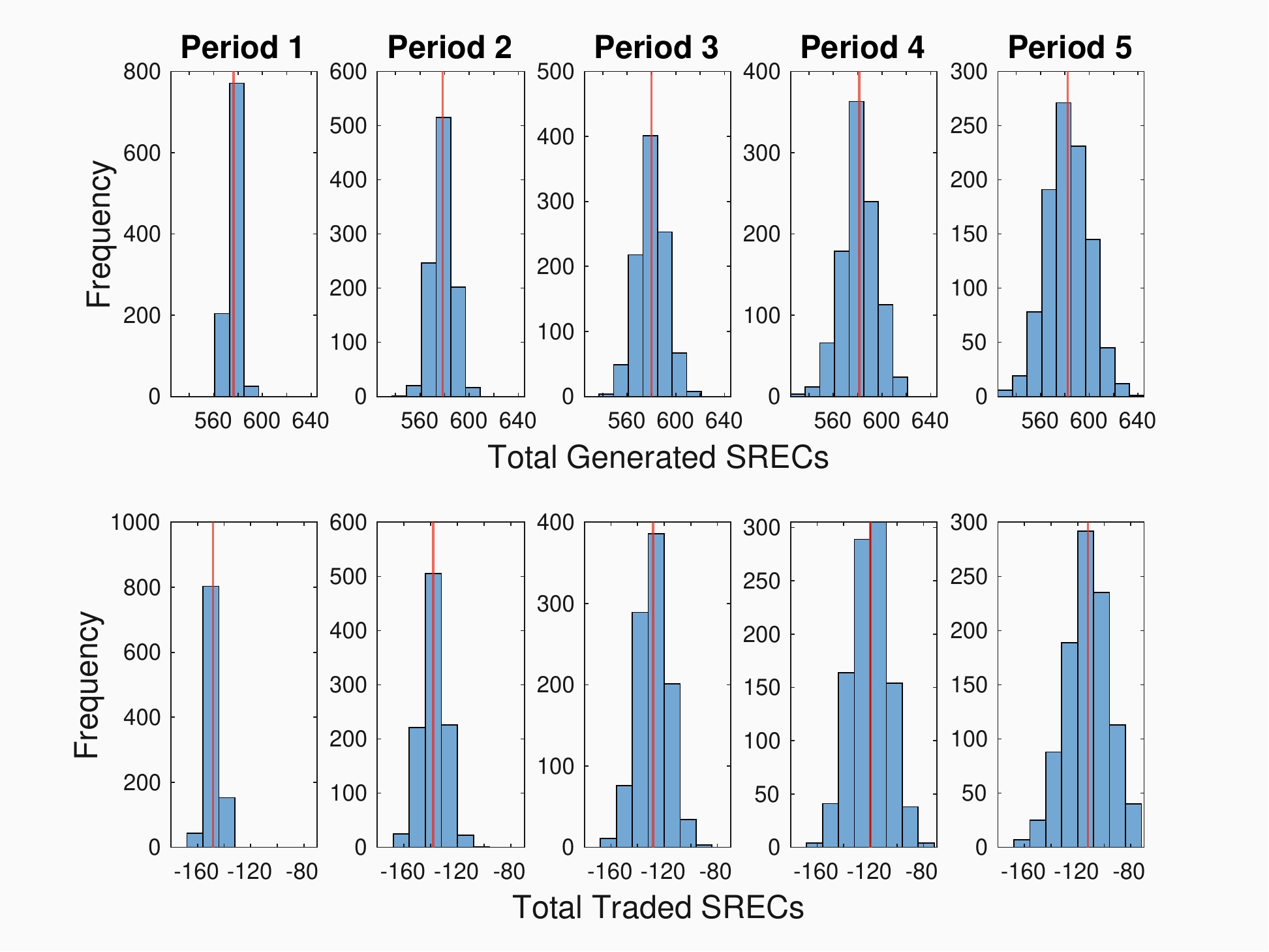}
\caption{Histogram of firm generation and trading across each compliance period with $S_0 = 150, b_0 = 250$. Parameters as in Tables \ref{tbl:ComplianceParams} and \ref{tbl:ModelParams}.}
\label{fig:histograms_mult_price_impact}
\end{figure}
From the figure, we note that aggregate selling decreases as $m$ increases, while total planned generation is relatively more static. In particular, the static nature of $\int_0^T g_u du$ arises because lower values of $m$ are associated with higher levels of excess SRECs, as the firm begins with $b_0 = 250$ and thus has the freedom to plan to generate slowly. The change in trading is the result of the firm reacting to the (generally) lower SREC prices that occur when price impacts are active. We also see that the variance of the firm's aggregate behaviour increases as the periods progress. This is the result of simulating forward paths of $S_t$ conditioning on $\mathcal{F}_0$, as $\text{Var}(S_t | S_0)$ is increasing in $t$. As before, these patterns persist across various choices of $S_0$ and $b_0$. To avoid repetition, plots for other initial conditions are not included in this work.
\section{Conclusion}

In this work, we characterize the optimal behaviour of a single regulated LSE in a single-period SREC market. In particular, we characterize their optimal generation and trading behaviour as the solution to a continuous time stochastic control problem. In doing so, we characterize the solution and tease out essential features of the optimal strategy. We also numerically solve for the system in a discrete time setting for both single and multi-period SREC frameworks. Through this, we provide intuition and reasoning for the resulting optimal behaviour, including detailed analysis of various sample paths, summary statistics, strategies, and parameter choices.

Many further extensions are possible. Interactions between agents are a critical component of real SREC markets that are largely ignored in this single-firm setup. In particular, incorporating partial information of firms would be a very challenging but mathematically interesting problem that would more closely mimic the realities of SREC markets. This could potentially necessitate the use of a mean field games approach. Improved calibration to real world parameters would also increase the applicability of this work for use by regulated firms and regulators. \textcolor{black}{The privacy of the relevant data needed to accurately calibrate the cost parameters presents a significant challenge to this endeavour.}

However, even our simple model reveals salient facts about the nature of these systems and how firms should behave when regulated  by them. Our single-period model reveals that the optimal generation and trading of regulated firms broadly exists in three regimes, depending on the marginal benefit received from holding an additional SREC. We observe that a firm's trading behaviour is more sensitive to changes in $S$ than its generation behaviour, and that higher SREC prices imply greater generation and lower purchasing (more selling). We show consistency between the numerical and theoretical solutions for our model. In particular, the interesting property that firms should generate above their baseline or shut down entirely is clearly demonstrated theoretically and empirically. Furthermore, we discuss sensitivity to selected other parameters in our model.

When extending to the multiple-period framework, we observe many similarities, but also the key difference that a fourth regime exists in the optimal generation and trading of regulated firms; that is, the regime where a marginal SREC does not provide value in the current period, but may be banked to provide value in the future. Additionally, we compare and contrast the optimal behaviours of firms throughout the multiple-period framework based on different initialization points, and study the changes in their aggregate behaviour across compliance periods.

In providing these results, we have produced a framework and numerical solution that would be of use for both regulated firms and regulatory bodies who both have immense interest in understanding the optimal behaviour of regulated LSEs in these systems.
\nocite{*}
\bibliographystyle{siamplain}
\bibliography{references}

\begin{thebibliography}{10}

\bibitem{aid2017coordination}
{\sc R.~A{\"\i}d, M.~Basei, and H.~Pham}, {\em The coordination of centralised
  and distributed generation}, arXiv preprint arXiv:1705.01302,  (2017).

\bibitem{almgren2001optimal}
{\sc R.~Almgren and N.~Chriss}, {\em Optimal execution of portfolio
  transactions}, Journal of Risk, 3 (2001), pp.~5--40.

\bibitem{almgren2003optimal}
{\sc R.~F. Almgren}, {\em Optimal execution with nonlinear impact functions and
  trading-enhanced risk}, Applied mathematical finance, 10 (2003), pp.~1--18.

\bibitem{amundsen2006price}
{\sc E.~S. Amundsen, F.~M. Baldursson, and J.~B. Mortensen}, {\em Price
  volatility and banking in green certificate markets}, Environmental and
  Resource Economics, 35 (2006), pp.~259--287.

\bibitem{burkelagrange}
{\sc J.~Burke}, {\em Lecture notes in fundamentals of optimization}, 1997.

\bibitem{carmona_coulon_schwarz_2012}
{\sc R.~Carmona, M.~Coulon, and D.~Schwarz}, {\em The valuation of clean spread
  options: linking electricity, emissions and fuels}, Quantitative Finance, 12
  (2012), p.~1951–1965, \url{https://doi.org/10.1080/14697688.2012.750733}.

\bibitem{carmona_fehr_hinz_2009}
{\sc R.~Carmona, M.~Fehr, and J.~Hinz}, {\em Optimal stochastic control and
  carbon price formation}, SIAM Journal on Control and Optimization, 48 (2009),
  p.~2168–2190, \url{https://doi.org/10.1137/080736910}.

\bibitem{carmona2010market}
{\sc R.~Carmona, M.~Fehr, J.~Hinz, and A.~Porchet}, {\em Market design for
  emission trading schemes}, Siam Review, 52 (2010), pp.~403--452.

\bibitem{cartea_figueroa_2005}
{\sc {\'A}.~Cartea and M.~G. Figueroa}, {\em Pricing in electricity markets: A
  mean reverting jump diffusion model with seasonality}, Applied Mathematical
  Finance, 12 (2005),
  \url{https://papers.ssrn.com/sol3/papers.cfm?abstract_id=592262}.

\bibitem{cartea2015algorithmic}
{\sc {\'A}.~Cartea, S.~Jaimungal, and J.~Penalva}, {\em Algorithmic and
  high-frequency trading}, Cambridge University Press, 2015.

\bibitem{coulon_khazaei_powell_2015}
{\sc M.~Coulon, J.~Khazaei, and W.~B. Powell}, {\em $\text{SMART-SREC}$: A
  stochastic model of the $\text{New Jersey}$ solar renewable energy
  certificate market}, Journal of Environmental Economics and Management, 73
  (2015), p.~13–31, \url{https://doi.org/10.1016/j.jeem.2015.05.004}.

\bibitem{supplier_list_2020}
{\sc FirstEnergyCorp}, {\em Licensed electric generation suppliers \&
  consultants}, Jan 2020,
  \url{https://firstenergycorp.com/content/customer/customer_choice/new_jersey/supplier_list.html}.

\bibitem{hitzemann2018equilibrium}
{\sc S.~Hitzemann and M.~Uhrig-Homburg}, {\em Equilibrium price dynamics of
  emission permits}, Journal of Financial and Quantitative Analysis, 53 (2018),
  pp.~1653--1678.

\bibitem{howison_schwarz_2012}
{\sc S.~Howison and D.~Schwarz}, {\em Risk-neutral pricing of financial
  instruments in emission markets: A structural approach}, SIAM Journal on
  Financial Mathematics, 3 (2012), p.~709–739,
  \url{https://doi.org/10.1137/100815219}.

\bibitem{hustveit2017tradable}
{\sc M.~Hustveit, J.~S. Frogner, and S.-E. Fleten}, {\em Tradable green
  certificates for renewable support: The role of expectations and
  uncertainty}, Energy, 141 (2017), pp.~1717--1727.

\bibitem{khazaei2017adapt}
{\sc J.~Khazaei, M.~Coulon, and W.~B. Powell}, {\em Adapt: A price-stabilizing
  compliance policy for renewable energy certificates: The case of srec
  markets}, Operations Research, 65 (2017), pp.~1429--1445.

\bibitem{kolesnikoff_cleveland_shields_2019}
{\sc A.~Kolesnikoff, M.~Cleveland, and L.~Shields}, {\em State renewable
  portfolio standards and goals}, Dec 2019,
  \url{https://www.ncsl.org/research/energy/renewable-portfolio-standards.aspx}.

\bibitem{kushner1972necessary}
{\sc H.~Kushner}, {\em Necessary conditions for continuous parameter stochastic
  optimization problems}, SIAM Journal on Control, 10 (1972), pp.~550--565.

\bibitem{lane_2020}
{\sc C.~Lane}, {\em Explained: The end of nj's srec program, transitioning to
  trecs}, Jan 2020,
  \url{https://www.solar-estimate.org/news/nj-srec-program-nearing-the-end-what-next}.

\bibitem{chinaRPS}
{\sc E.~Martinot and I.~for Sustainable Energy~Policies}, {\em Renewable energy
  policy update for china}, Sep 2019,
  \url{https://www.renewableenergyworld.com/2010/07/21/renewable-energy-policy-update-for-china/}.

\bibitem{srec_pricing}
{\sc NJCleanEnergy}, {\em Srec pricing: Nj oce web site},
  \url{https://www.njcleanenergy.com/renewable-energy/project-activity-reports/srec-pricing/srec-pricing}.

\bibitem{nj_power}
{\sc G.~of~New~Jersey}, {\em Nj power switch: Licensed third: Party suppliers},
  Mar 2019, \url{https://nj.gov/njpowerswitch/suppliers/electric/#simple1}.

\bibitem{peng1990general}
{\sc S.~Peng}, {\em A general stochastic maximum principle for optimal control
  problems}, SIAM Journal on control and optimization, 28 (1990), pp.~966--979.

\bibitem{pham2009continuous}
{\sc H.~Pham}, {\em Continuous-time stochastic control and optimization with
  financial applications}, vol.~61, Springer Science \& Business Media, 2009.

\bibitem{pjm}
{\sc PJM-GATS}, {\em Public reports},
  \url{https://www.pjm-eis.com/reports-and-events/public-reports.aspx}.

\bibitem{seifert_uhrig-homburg_wagner_2008}
{\sc J.~Seifert, M.~Uhrig-Homburg, and M.~Wagner}, {\em Dynamic behavior of
  $\text{CO}_2$ spot prices}, Journal of Environmental Economics and
  Management, 56 (2008), p.~180–194,
  \url{https://doi.org/10.1016/j.jeem.2008.03.003}.

\bibitem{SRECTrade}
{\sc SRECTrade}, {\em New jersey srec market}, 2019,
  \url{https://www.srectrade.com/srec_markets/new_jersey} (accessed
  2019-01-24).

\end{thebibliography}
\appendix 

\section{Additional Figures} \label{additional_figures}

Included below are plots of the regulated firm's optimal behaviour in the context of \Cref{mpm_pi}, for periods 2-5 of a 5-period model, with price impacts active. In all cases, the legend in Figure \ref{fig:opt_beh} applies.

\begin{figure}[h!]
\centering
{\large$\boldsymbol{m = 2, 3}$\qquad\qquad\quad}
\\
\includegraphics[align = c, width=0.4\textwidth]{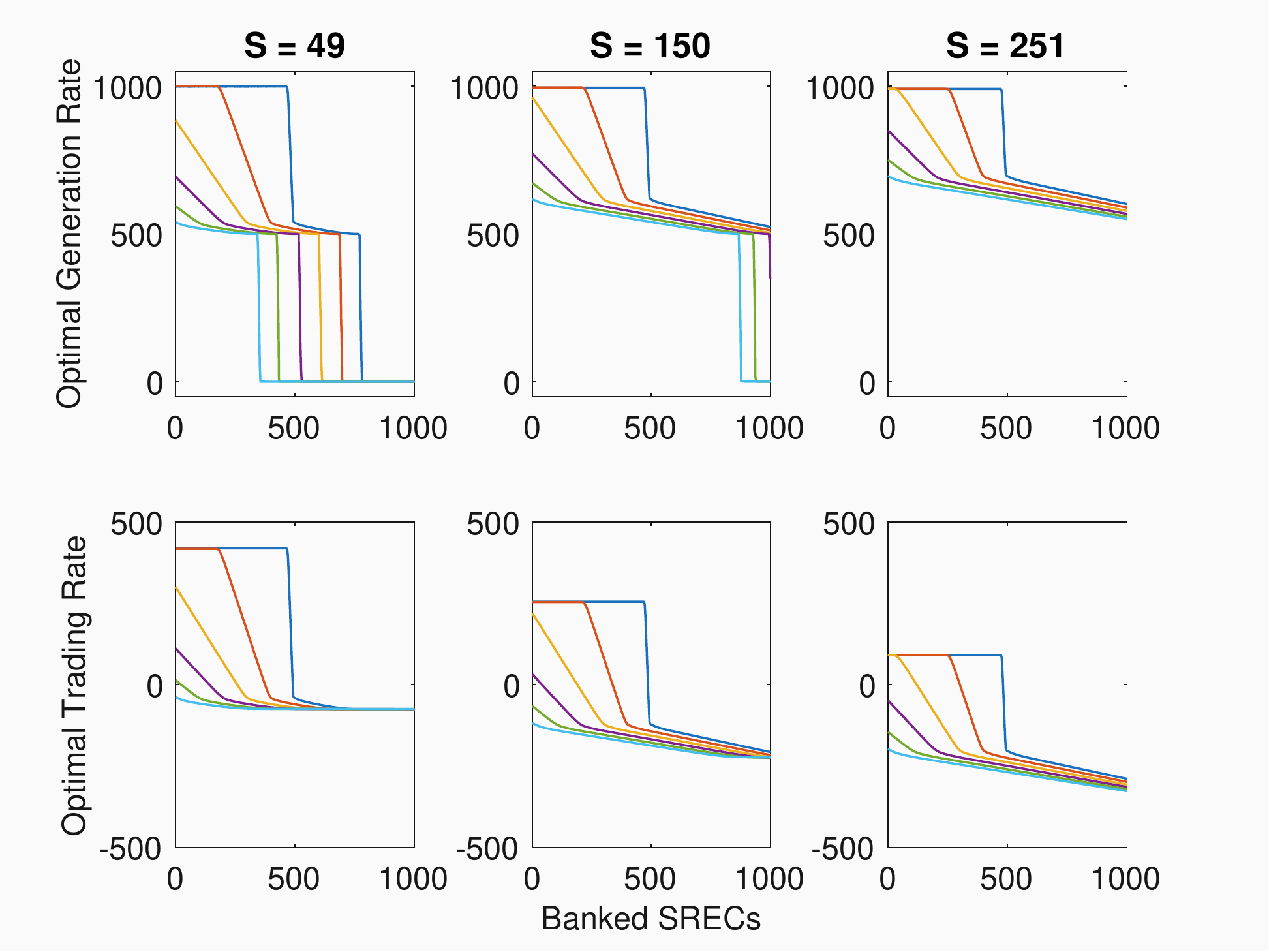}
\includegraphics[align = c, width=0.4\textwidth]{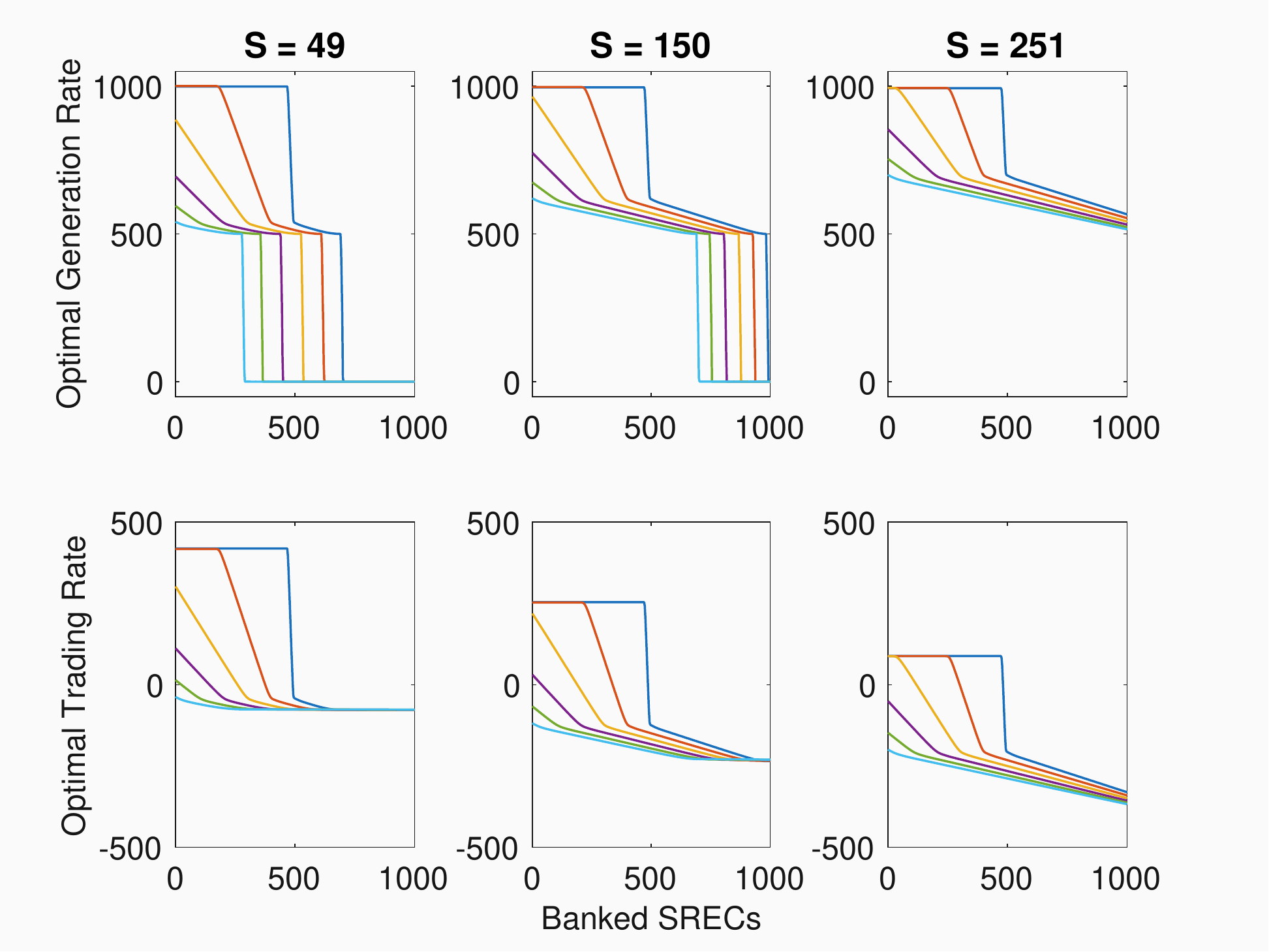}
\caption{Optimal firm behaviour as a function of banked SRECs across various time-steps (during the second and third of five compliance periods) and SREC market prices. Parameters as in Tables as in Tables \ref{tbl:ComplianceParams} and \ref{tbl:ModelParams}.}
\label{fig:multi-period-23_pi}
\end{figure}
\begin{figure}[h]
\centering
{\large$\boldsymbol{m = 4, 5}$\qquad\qquad\quad}
\\
\includegraphics[align = c, width=0.4\textwidth]{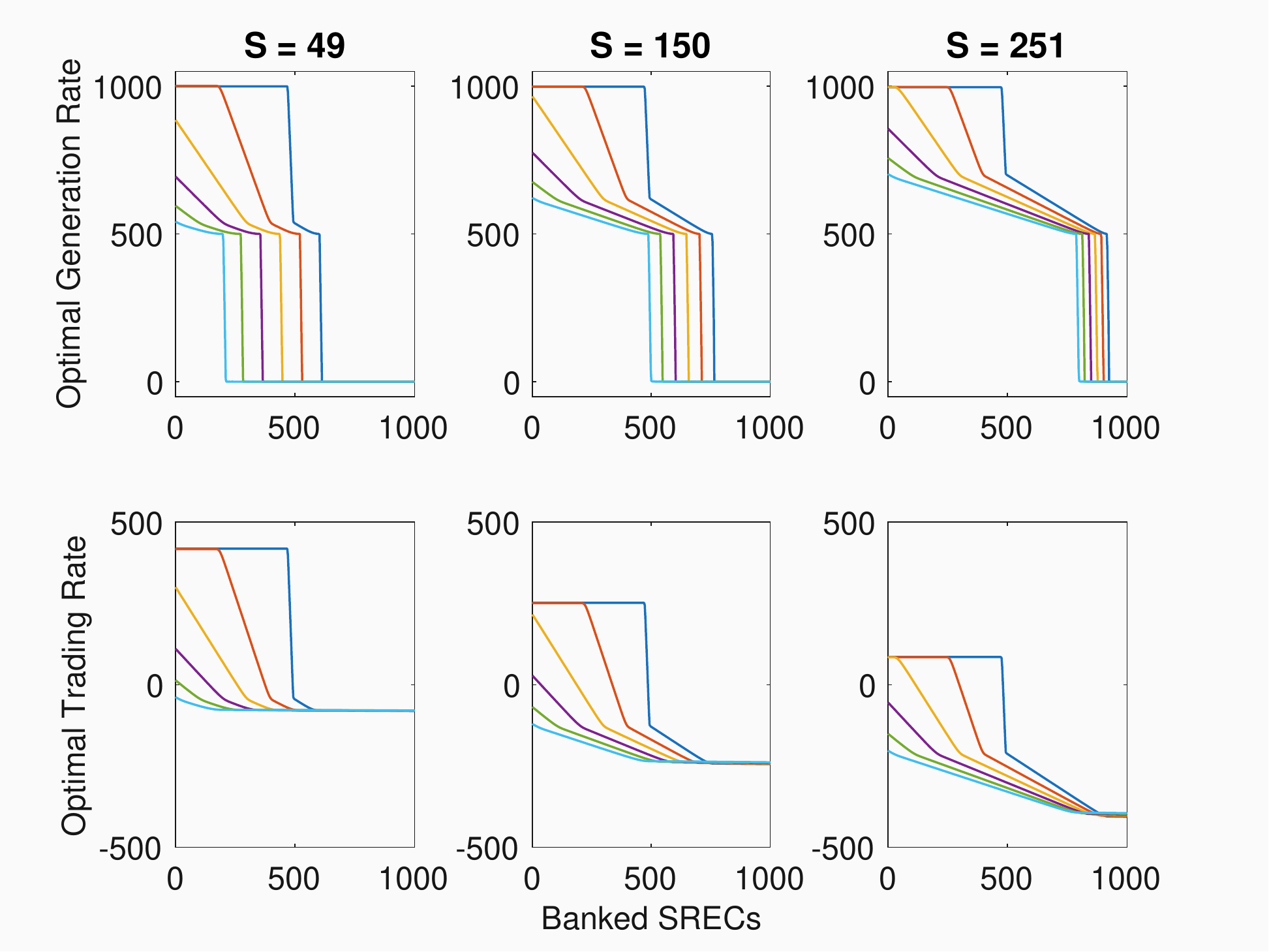}
\includegraphics[align = c, width=0.4\textwidth]{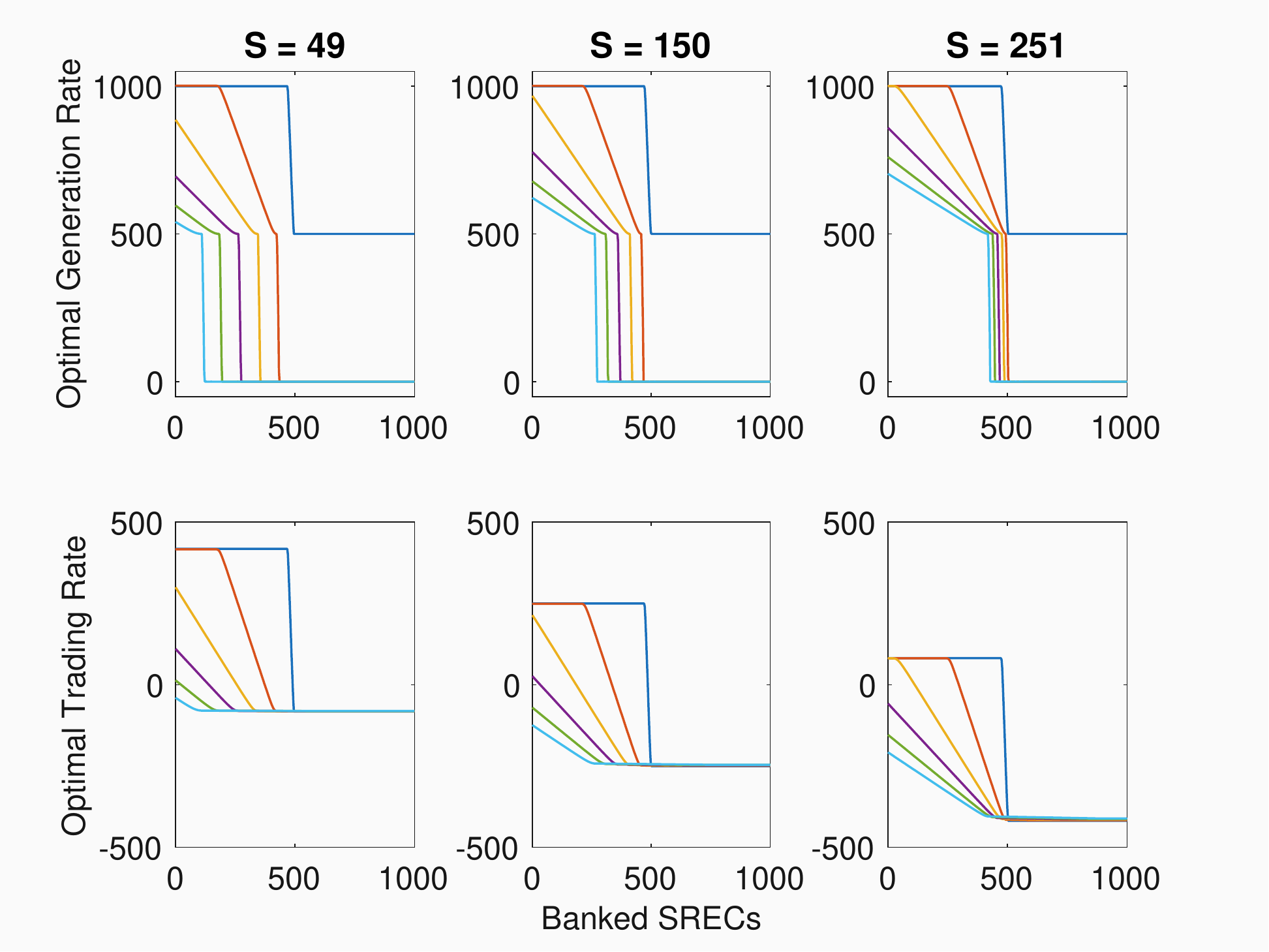}
\caption{Optimal firm behaviour as a function of banked SRECs across various time-steps (during the fourth and fifth of five compliance periods) and SREC market prices. Parameters as in Tables \ref{tbl:ComplianceParams} and \ref{tbl:ModelParams}.}
\label{fig:multi-period-45_pi}
\end{figure}


\end{document}